\DeclareMathOperator*{\argmin}{arg\,min}
\newcommand{\blind}{1}
\def\T{{ \mathrm{\scriptscriptstyle T} }}
\newtheorem{definition}{Definition}
\newtheorem{theorem}{Theorem}
\newtheorem{remark}{Remark}
\newtheorem{proposition}{Proposition}
\newtheorem{lemma}{Lemma}
\begin{document}

\date{}
\def\spacingset#1{\renewcommand{\baselinestretch}%
{#1}\small\normalsize} \spacingset{1}


\if1\blind
{
  \title{\bf Matrix Linear Discriminant Analysis}
  \author{Wei Hu, Weining Shen
  \hspace{.2cm}\\
  Department of Statistics, University of California, Irvine\\\\
  Hua Zhou\\
  Department of Biostatistics, University of California, Los Angeles\\\\
    Dehan Kong \footnote{
Address for correspondence:
 Dehan Kong, Ph.D., E-mail:
kongdehan@utstat.toronto.edu}\\
    Department of Statistical Sciences, University of Toronto
    }
  \maketitle
} \fi

\if0\blind
{
  \bigskip
  \bigskip
  \bigskip
  \begin{center}
    {\LARGE\bf Matrix Linear Discriminant Analysis}
\end{center}
  \medskip
} \fi

\bigskip
\begin{abstract}
We propose a novel linear discriminant analysis approach for the classification of high-dimensional matrix-valued data that commonly arises from imaging studies.  Motivated by the equivalence of the conventional linear discriminant analysis and the ordinary least squares, we consider an efficient nuclear norm penalized regression that encourages a low-rank structure. Theoretical properties including a non-asymptotic risk bound and a rank consistency result are established. Simulation studies and an application to electroencephalography data show the superior performance of the proposed method over the existing approaches. 
\end{abstract}

\noindent%
{\it Keywords:}  Linear discriminant analysis; Low rank; Matrix data; Nuclear norm; Risk bound; Rank consistency
\vfill

\newpage
\spacingset{1.45} 
\section{Introduction}
\label{sec:intro}

Modern technologies have generated a large number of datasets that possess a matrix structure for classification purpose. For example, in neuropsychiatric disease studies, it is often of interest to evaluate the prediction accuracy of prognostic biomarkers by relating two-dimensional imaging predictors, e.g., electroencephalography (EEG) and magnetoencephalography, to clinical outcomes such as diagnostic status \citep{Mu2011}. In this paper, we focus on extending one of the most commonly used classification methods, Fisher linear discriminant analysis (LDA) to matrix-valued predictors. Progress has been made in recent years on developing sparse LDA using $\ell_1$-regularization \citep{tibshirani1996}, including \citet{Shao2011, Fan2012, mai2012}. However, all these methods only deal with vector-valued covariates; and it remains challenging to accommodate the matrix structure. Naively transforming the matrix data into a high-dimensional vector will result in unsatisfactory results for several reasons. First, vectorization destroys the structural information within the matrix such as shapes and spatial correlations. Second, turning a $p \times q$ matrix into a $pq \times 1$ vector generates unmanageably high dimensionality. E.g., estimating the population precision matrix for LDA can be troublesome if $pq \gg n$. Third, $\ell_1$-regularization does not necessarily work well because the underlying two-dimensional signals are usually approximately low-rank rather than $\ell_0$-sparse. 

Recently, there are some development of regression methods for matrix data. \citet{chen2013reduced} invented an adaptive nuclear norm penalization approach for low-rank matrix approximation. \citet{zhou2014} proposed a class of regularized matrix regression methods based on spectral regularization. \citet*{wang2017} developed a generalized scalar-on-image regression model via total variation. \citet{kong2018l2rm} proposed a low-rank linear regression model with high-dimensional matrix response and high dimensional scalar covariates, while \citet{hu2019nonparametric} developed a nonparametric matrix response regression model. 

In this paper, we propose a new matrix LDA approach by building on the equivalence between the classical LDA and the ordinary least squares. We formulate the binary classification as a nuclear norm penalized least squares problem, which efficiently exploits the low rank structure of the two-dimensional discriminant direction matrix. The involved optimization is amenable to the accelerated proximal gradient method. Although our problem is formulated as a penalized regression problem, a fundamental difference is that the covariates ${\bf X}_i$ and the residuals $\epsilon_i$ are no longer independent in our case. This requires extra effort for developing the risk bound and rank consistency result. The risk bound is explicit in terms of the rank of the image, image size, sample size, and the eigenvalues of the covariance matrix for the image covariates. This result also implies estimation consistency provided the $p \times q$ image satisfies $\max(p,q) = o(n/\log^3 n)$. Under stronger conditions, we show that the rank of the coefficient matrix can be consistently estimated as well. The proof is based on exploiting the spectral norm of random matrices with mixture-of-Gaussian components and extending the results in \citet{Bach2008} to allow diverging matrix dimensions. Finally, we prove that our method enjoys classification error consistency. 

It is worth noting that the 2D image classification problem has been studied by \citet{zhong2015matrix}, where they proposed a penalized matrix discriminant analysis method (PMDA) that projects the matrix coefficient into row space and column space separately. Those two projections are then estimated iteratively and integrated together for classification. Compared with PMDA, we make the following contributions. First, the rank of the PMDA is set as one because of the separability assumption, while we allow the rank of the direction matrix to take general positive integer values and the rank can then be selected by a data driven procedure. Our rank assumption is more flexible in practice and hence often leads to a lower mis-classification error in the numerical studies. Second, our method adopts a direct estimation approach by solving a nuclear norm penalized  regression problem, which is computationally much faster compared with PMDA, where the estimation involves an iterative procedure for calculating the inverse of covariance matrices during each iteration. Third, our method can handle the high-dimensional data when image dimensions $p $ and $ q $ are much larger than the sample size, which is the case for many applications; while PMDA cannot handle the case when $ p+q>n$. Finally, we have provided theoretical guarantee for our estimator when $ p $ and $ q $ diverge with $ n $. In particular, we have developed an non-asymptotic error bound for the estimated LDA direction, as well as results on rank consistency and classification error consistency. These results are stronger compared with the root-$n$ consistency of the LDA direction in \citet{zhong2015matrix}, where both $ p $ and $ q $ are assumed to be fixed.

\section{Method}
We first define some useful notations. Let $\text{vec}(\cdot)$ be a vectorization operator, which stacks the entries of a matrix into a column vector. The inner product between two matrices of same size is defined as $\langle {\bf M, N} \rangle=\text{tr}({\bf M}^{\T} {\bf N})=\langle \text{vec}({\bf M}), \text{vec}({\bf N}) \rangle$.

Consider a binary classification problem, where $\bf X$ is a two-dimensional image covariate with dimension $p\times q$ and $G = 1, 2$ denotes the class labels. The LDA assumes that $\text{vec} ({\bf X}) \mid G = g\sim N({\bf \mu}_g, {\bf \Sigma})$, $\text{pr}(G=1) = \pi_1$, and
$\text{pr}(G=2) = \pi_2$.  Suppose we have $n$ subjects with $n_1$ subjects belonging to class 1 and $n_2=n-n_1$ subjects to class 2.  It is well known that LDA is connected to the linear regression with the class labels as responses \citep{duda2012pattern, mika2002}. When $ pq<n $, the classical LDA is equivalent to solving
\begin{align} \label{OLS}
(\hat\beta_0^{\text{ols}}, \hat{\bf B}^{\text{ols}} ) = \argmin_{\beta_0,B}\sum_{i=1}^n\Big (y_i - \beta_0 - \langle {\bf{X}_i, B}\rangle \Big)^2,
\end{align}
where ${\mathbf X}_i$ is the image covariate from subject $i$, $\mathbf B$ is the coefficient matrix for the image covariate and it represents the direction of the linear discriminant classifier, $\beta_0$ is the intercept, and the response $y_i = -n/n_1$ if subject $i$ is in class 1, and $y_i = n/n_2$ if subject $i$ is in class 2. Although this connection gives the exact LDA direction when $ pq < n $, it has two potential drawbacks. First, when $ pq>n $, the equivalence between Fisher LDA and (\ref{OLS}) is lost because of the non-uniqueness of solution. Second, the formulation (\ref{OLS}) does not incorporate the 2D image structure when estimating the direction because $  \langle {\bf X}_i, {\bf B} \rangle=\langle \text{vec}({\bf X}_i), \text{vec}({\bf B}) \rangle$. These motivate us to consider a penalized version of (\ref{OLS}) as follows
\begin{align} \label{POLS1}
(\hat\beta_0, \hat {\bf B}) = \argmin_{\beta_0, {\bf B}}\frac{1}{2n} \sum_{i=1}^n \Big(y_i - \beta_0 - \langle{\bf X}_i,{\bf B} \rangle \Big)^2 + \omega_n\|{\bf B}\|_\ast, 
\end{align}
where the nuclear norm $ \|{\bf B}\|_{\ast}=\sum_{j}\sigma_j({\bf B}) $ and  $ \sigma_j({\bf B}) $s are the singular values of the matrix $ {\bf B} $. The nuclear norm $ \|{\bf B}\|_\ast $ plays an important role because it imposes a low rank structure in the estimated direction $ \hat {\bf B} $.  An alternative choice is to add a Lasso type penalty, i.e. $ \omega_n\|{\bf B}\|_{1,1}=\omega_n\sum_{j=1}^p \sum_{k=1}^q |b_{jk}|$, where $ b_{jk} $ is the $jk$-th element of $ {\bf B} $. However, the Lasso type penalty can only identify at most $ n $ nonzero components, and for most cases in imaging studies, the signal is usually not that sparse. More importantly, the Lasso type of penalty ignores the matrix structure because it is equivalent to vectorizing the array and applying sparse LDA. Once $ \hat {\bf B} $ from (\ref{POLS1}) is obtained, a naive classification rule will assign the $i$-th subject to class 2 if $  \langle {\bf X}_i, \hat {\bf B} \rangle  + \hat\beta_0 > 0 $. However, it can be shown that the intercept $  \hat\beta_0 $ obtained from (\ref{POLS1}) is not optimal. Instead, we use the optimal intercept $\tilde\beta_0$ that minimizes the training error after obtaining $\hat {\bf B}$. \citet{mai2012} showed that the intercept of LDA actually has a closed form. Their derivations can be easily applied to our case. In particular, if $(\hat\mu_2 - \hat\mu_1)^{\T}\text{vec}(\hat {\bf B}) >0$, then
\begin{eqnarray}\label{optimalintercept}
\tilde\beta_0=-(\hat\mu_1 +\hat\mu_2)^{\T}\text{vec}(\hat {\bf B})/2 + \text{vec}(\hat {\bf B})^{\T}\hat{\bf \Sigma} \text{vec}(\hat {\bf B})\{(\hat\mu_2 - \hat\mu_1)^{\T}\text{vec}(\hat {\bf B})\}^{-1}\log(n_2/n_1),
\end{eqnarray} 
where $\hat\mu_g$ is the sample mean for subjects in class $g$ and $\hat\Sigma$ is the estimated covariance matrix.   If $(\hat\mu_2 - \hat\mu_1)^{\T}\text{vec}(\hat {\bf B}) <0$, we can plug $ -\hat {\bf B} $ into (\ref{optimalintercept}) to obtain the optimal intercept $ \tilde\beta_0 $. The optimal classification rule is to assign the $i$-th subject to class 2 if $ \langle {\bf X}_i, \hat {\bf B} \rangle  + \tilde\beta_0 > 0$.

For any fixed $\omega_n$, the optimization problem in \eqref{POLS1} can be solved using the accelerated proximal gradient method \citep{nesterov1983, beck2009}. \citet*{zhou2014} studied the algorithm for the nuclear norm regularized matrix regression.  As we know, nuclear norm is not differentiable. Fortunately, its subderivative $\partial \|.\|_\ast$ exists. Therefore \eqref{POLS1} has local minima $(\hat\beta_0, \hat {\bf B})$ if and only if $0 \in -\frac{1}{n}\sum_{i=1}^n {\bf X}_i\epsilon_i + \omega_n\partial \|\hat {\bf B}\|_\ast.$ Thanks to the convexity of nuclear norm, the local minima is global as well. Based on these facts, singular value thresholding method for nuclear norm regularization was deployed for building blocks of the Nesterov's method. Compared with classical gradient decent method with convergence of $O(t^{-1})$, where $t$ denotes the number of iteration, Nesterov's  accelerated gradient decent method achieves convergence rate of $O(t^{-2})$. It differs from traditional algorithms by utilizing the estimators from previous two iterations to generate the next estimator. For computational algorithm, we use the {\tt matrix\_sparsereg} function in the Matlab TensorReg Toolbox (\url{https://hua-zhou.github.io/TensorReg/}) for solving nuclear norm penalized matrix regression. It implements an optimal Nesterov acceleration of the proximal gradient algorithm. Actually one contribution of our paper is to link matrix LDA to regularized matrix regression so that the computational machinery developed for the latter can be applied to matrix LDA problems. For tuning of the $ \omega_n $, we adopt the \textsc{bic} derived by \citet*{zhou2014} under the nuclear norm regularized matrix regression framework. 

\section{Theory}
In this section we discuss the theoretical properties of  the proposed regularization estimator. Denote the residuals $\epsilon_i=y_i - \beta_0 - \langle{\bf X}_i,{\bf B} \rangle$ and the true coefficient matrix by ${\bf B}_0$. {By the equivalence between LDA direction and least squares, we know $\text{vec}({\bf B}_0)$ can be written as $c{\bf \Sigma}^{-1}(\mu_2 - \mu_1)$ for some positive constant $c$.} Consider the singular value decomposition ${\bf B}_0 = {\bf U}_0\text{Diag}(S_0){\bf V}_0^{\T}$ with ${\bf U}_0 \in \mathrm{R}^{p \times r}$ and ${\bf V}_0 \in  \mathrm{R}^{q \times r}$. Let ${\bf U}_{0\perp} \in \mathrm{R}^{p \times (p-r)} $ and ${\bf V}_{0\perp} \in \mathrm{R}^{q \times (q-r)}$ be (arbitrary) orthogonal complements of ${\bf U}_0$ and ${\bf V}_0$, respectively. We make the following assumptions. 
\begin{itemize}
\item [(A1)] We assume that the second-order moment of the covariate ${\bf X}$, $\text{E}(\text{vec}({\bf X})\text{vec}({\bf X})^{\T})$, denoted by ${\bf \Sigma}_{xx}$, satisfies $\lambda_l \leq \lambda_{\min}({\bf \Sigma}_{xx}) \leq \lambda_{\max}({\bf \Sigma}_{xx}) \leq \lambda_u$, where $\lambda_{\min}({\bf \Sigma}_{xx})$ and $\lambda_{\max}({\bf \Sigma}_{xx})$ are the smallest and largest eigenvalues of ${\bf \Sigma}_{xx}$, respectively, and $\lambda_l, \lambda_u$ are some positive constants. 
\item [(A2)] Let $ r = \text{rank}({\bf B}_0)$ be the unknown rank of the true coefficient matrix ${\bf B}_0$. Define ${\bf \Lambda} \in \mathrm{R}^{(p-r)\times(q-r)}$ as
\begin{align*}
\text{vec}({\bf \Lambda}) = \{({\bf V}_{0\perp}\otimes {\bf U}_{0\perp})^{\T}{\bf \Sigma}^{-1}({\bf V}_{0\perp}\otimes {\bf U}_{0\perp})\}^{-1}
\{({\bf V}_{0\perp}\otimes {\bf U}_{0\perp})^{\T}{\bf \Sigma}^{-1}({\bf V}_0\otimes {\bf U}_0)\text{vec}({\bf I})\}.
\end{align*}
We assume its spectral norm $\|{\bf \Lambda}\|_2 < 1$.
\item [(A3)] Assume the quantities $\omega_n$,  $\{\min(p,q)\}^{1/2}n^{-1/2}\omega_n^{-1}$, $ \min(p,q)n^{-1/2}$, $\omega_np^{1/2}q^{1/2}\min(p,q)$ tend  to $0$ as $n \rightarrow \infty$.
\item [(A4)] There exists a positive constant $C_{\mu}$ such that $\| {\bf \mu}_2 - \mu_1 \|_2 \leq C_{\mu} (\sqrt{p} + \sqrt{q})$.
\end{itemize}
Condition (A1) requires bounded eigenvalues for the covariance matrix of the vectored covariate, which is standard in the literature. Condition (A2) is similar with the strict consistency condition in \citet{Bach2008}. It is needed to establish rank consistency. This condition extends the classical strong irrepresentable condition in \citet{Zhao2006}, which is commonly used for proving model selection consistency of Lasso. The major difference between our Assumption (A2) and the similar assumption in \citet{Bach2008} is that the number of parameters is fixed in \citet{Bach2008} while in our case the number is diverging with $n$. Therefore we will need to assume that the regularization parameter $\omega_n$ decays slower than the one in \citet{Bach2008}. Condition (A3) puts more requirement on the order of $p,q$, and $w_n$ in order to obtain consistent rank estimation in addition to consistent coefficient estimation. This is expected since rank estimation consistency is usually not implied by parameter estimation consistency. Condition (A4) can be viewed as a sparsity assumption on $B_0$. Recall the solution (the slope) to classical LDA problem with vector covariates depends on the term $\mu_2 - \mu_1$. This assumption essentially implies that there are at most $O(\max(p,q))$ number of $O(1)$ elements in the true coefficient matrix ${\bf B}_0$ given the rank of ${\bf B}_0$ is fixed. 

Next, we briefly review two important concepts, namely decomposable regularizer and strong convex loss function, proposed by \citet{Nega2012} and highlight their connection to the risk bound property for our estimator.  
\begin{definition}
A regularizer $R(\cdot)$ is decomposable with respect to a given pair of subspaces $(M, N)$ where $M  \subseteq N^\perp$ if
\[
R(u + v) = R(u) + R(v) \text{\quad for all  } u\in M, v\in N.
\]
\end{definition}

In our setting, $R(\cdot)$ is the nuclear norm. Considering a matrix ${\bf B} \in \mathcal R^{p\times q}$ to be estimated, we observe that nuclear norm is decomposable given a pair of subspaces:
\[
M({\bf U}, {\bf V}) := \{{\bf B} \in \mathcal R^{p\times q} \mid \text{row}({\bf B}) \subseteq {\bf V}, \text{ col}({\bf B}) \subseteq {\bf U}\},
\]
\[
N({\bf U}, {\bf V}) := \{{\bf B} \in \mathcal R^{p\times q} \mid \text{row}({\bf B}) \subseteq {\bf V}^{\perp}, \text{ col}({\bf B}) \subseteq {\bf U}^{\perp}\},
\]
where ${\bf U}, {\bf V}$ represent $B$'s left and right singular { vectors}. For any pair of matrices ${\bf B}_1 \in M$ and ${\bf B}_2 \in N$, the inner product of ${\bf B}_1, {\bf B}_2$ is $0$ due to their mutually orthogonal rows and columns. Hence we conclude $R({\bf B}_1 + {\bf B}_2) = R({\bf B}_1) + R({\bf B}_2)$. Since we assume the true parameter has a low rank structure,  we expect the regularized estimator to have a large value of projection on $M({\bf U}, {\bf V})$ and a relatively small valued projection on $N({\bf U}, {\bf V})$.

When the loss function $L(\hat\beta_0, \hat {\bf B}_{\omega_n} )$ defined as $ \frac{1}{2n} \sum_{i=1}^n \Big(y_i - \hat\beta_0 - \langle {\bf X}_i, \hat {\bf B}_{\omega_n} \rangle \Big)^2$ is close to $L(\beta_0, {\bf B}_0)$, it is insufficient to claim $\hat {\bf B}_{\omega_n} - {\bf B}_0$ is small if the loss function $L$ is relatively flat. This is why the strong convexity condition is required.
\begin{definition}
For a given loss function $L$ and norm $\|.\|$, we say $L$ is strong convex with curvature $k_L$ and tolerance function $\tau_L$ if
\[
\delta L({\bf \Delta}, {\bf B}_0) \geq k_L\|{\bf \Delta}\|^2 - \tau_L^2({\bf B}_0), \text{\quad for any }\delta \in \mathcal C(M,N;{\bf B}_0),
\]
where $\mathcal C(M,N;{\bf B}_0) := \{{\bf \Delta} \in \mathcal R^{p\times q} \mid R({\bf \Delta}_{N})\leq 3R({\bf \Delta}_{N^\perp}) + 4R({\bf B}_{0N})\}$.
\end{definition}

Now we are ready to state the main result on the risk bound for our estimate. The proof is provided in the Appendix B.  
\begin{theorem}\label{tm1}
Suppose that (A1) and (A4) hold. Let $\hat{{\bf B}}$ be the solution to \eqref{POLS1}.  If $$\omega_n \geq \frac{12 (\log n)^{3/2} (C_{\mu} + \lambda_u^{1/2}) (\sqrt{p} + \sqrt{q} +  \sqrt{\log n}) }{\sqrt{n}}, $$ then with probability of at least $1 - C n^{-1}$ for some constant $C>0$, 
\begin{align*}
\|\hat {\bf B} -{\bf B}_0\|_F^2 + |\hat \beta - \beta_0^*|^2 \leq 9\frac{\omega_n^2}{\lambda_l}r,
\end{align*} 
where $\beta_0^* = \beta_0 - \pi_2^{-1} \{ c - 1+(\pi_2 - \pi_2^2) ({\bf D}^{\T} {\bf \Sigma}^{-1}{\bf D})\}$ and $c$ is some positive constant. 
\end{theorem}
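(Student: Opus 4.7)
The plan is to follow the unified framework of \citet{Nega2012} for regularized $M$-estimators, specialized to the nuclear norm and adapted to the LDA setting where the covariates ${\bf X}_i$ and the working residuals $\epsilon_i^*$ are dependent. Three ingredients are required: (i) decomposability of the regularizer with respect to $(M({\bf U}_0,{\bf V}_0), N({\bf U}_0,{\bf V}_0))$, which is already verified in the discussion preceding the theorem; (ii) restricted strong convexity of the empirical quadratic loss on the relevant cone, with curvature of order $\lambda_l$; and (iii) a choice of $\omega_n$ dominating twice the spectral norm of the empirical gradient at a correctly chosen population target $(\beta_0^*, {\bf B}_0)$.

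The first preliminary step is to identify the correct target. By the LDA--OLS equivalence, $\text{vec}({\bf B}_0) = c\,{\bf \Sigma}^{-1}({\bf \mu}_2-{\bf \mu}_1)$ is already the population minimizer in ${\bf B}$ of $E\{(y - \beta_0 - \langle {\bf X},{\bf B}\rangle)^2\}$, but the associated intercept is not the LDA intercept $\beta_0$ because the working response $y_i \in \{-n/n_1, n/n_2\}$ has class-conditional mean differing from $\pm 1$ at the population level. I would define $\beta_0^* := E[y] - E[\langle {\bf X},{\bf B}_0\rangle]$ and compute it using $\mathrm{pr}(G = g) = \pi_g$ and $E[\text{vec}({\bf X})] = \pi_1 {\bf \mu}_1 + \pi_2 {\bf \mu}_2$; the closed form stated in the theorem, involving $({\bf \mu}_2-{\bf \mu}_1)^{\T}{\bf \Sigma}^{-1}({\bf \mu}_2-{\bf \mu}_1)$, follows from direct algebra using the explicit expression for $\text{vec}({\bf B}_0)$.

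For restricted strong convexity, I would use that the Hessian of the loss is $n^{-1}\sum_i (1, \text{vec}({\bf X}_i))(1, \text{vec}({\bf X}_i))^{\T}$, whose population version is bounded below by $\lambda_l$ by (A1) combined with the mixture-of-Gaussians law of ${\bf X}$. On the restricted cone $\mathcal C(M, N; {\bf B}_0)$, the identity ${\bf B}_{0N} = 0$ together with decomposability yields $\|{\bf \Delta}\|_* \le 4\sqrt{2r}\|{\bf \Delta}\|_F$, so ${\bf \Delta}$ has effective rank $O(r)$. Transferring the population lower bound to the sample on this low-effective-rank set via a matrix concentration inequality for the sample covariance of Gaussian-mixture vectors then delivers empirical curvature of order $\lambda_l$; (A3) ensures the deviation term is negligible compared with $\lambda_l$.

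The main obstacle is the deviation inequality $\|n^{-1}\sum_{i=1}^n {\bf X}_i \epsilon_i^*\|_{2} \le \omega_n/2$, where dependence of ${\bf X}_i$ and $\epsilon_i^*$ is essential. Conditional on $G_i = g$, write ${\bf X}_i = {\bf \mu}_g + {\bf Z}_i$ with $\text{vec}({\bf Z}_i) \sim N(0,{\bf \Sigma})$ and $\epsilon_i^* = a_g - \langle {\bf Z}_i, {\bf B}_0\rangle$ for a deterministic $a_g$ chosen so that $E[{\bf X}_i\epsilon_i^* \mid G_i] = 0$; this conditional centering is precisely what the shift $\beta_0 \mapsto \beta_0^*$ secures. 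Expanding,
\begin{align*}
{\bf X}_i \epsilon_i^* = a_g {\bf \mu}_g + a_g {\bf Z}_i - {\bf \mu}_g \langle {\bf Z}_i, {\bf B}_0\rangle - {\bf Z}_i \langle {\bf Z}_i, {\bf B}_0\rangle,
\end{align*}
I would control each averaged term separately: $n^{-1}\sum_i a_{G_i}{\bf \mu}_{G_i}$ concentrates around its zero expectation by an elementary Bernstein bound; $n^{-1}\sum_i a_{G_i}{\bf Z}_i$ has spectral norm of order $(\sqrt{p}+\sqrt{q})/\sqrt{n}$ by Gaussian matrix tail bounds; $n^{-1}\sum_i {\bf \mu}_{G_i}\langle {\bf Z}_i, {\bf B}_0\rangle$ is handled by a sub-Gaussian vector tail bound together with (A4), yielding order $(\sqrt{p}+\sqrt{q})\sqrt{\log n/n}$; and the genuinely quadratic Gaussian term $n^{-1}\sum_i {\bf Z}_i \langle {\bf Z}_i, {\bf B}_0\rangle$ requires truncating $\|{\bf Z}_i\|_2$ at level $\sqrt{\log n}(\sqrt{p}+\sqrt{q})$ before applying matrix Bernstein to the truncated sum, with the truncation paid for by a union bound. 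This last term is what produces the $(\log n)^{3/2}$ factor in the theorem's lower bound on $\omega_n$, and is the most delicate part of the argument because both factors in ${\bf Z}_i\langle {\bf Z}_i,{\bf B}_0\rangle$ are Gaussian and jointly dependent through ${\bf B}_0$. Once the three ingredients are in hand, the final bound $\|\hat {\bf B} - {\bf B}_0\|_F^2 + |\hat\beta - \beta_0^*|^2 \le 9\omega_n^2 r/\lambda_l$ follows from the standard convex-analytic argument of \citet{Nega2012}: optimality of $(\hat\beta, \hat {\bf B})$, decomposability to split the nuclear-norm error across $M$ and $N$, and RSC to convert the resulting cone inequality into a squared-error bound with the stated constant $9$.
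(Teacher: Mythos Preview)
Your overall architecture matches the paper's: both invoke the \citet{Nega2012} framework, so decomposability of the nuclear norm, restricted strong convexity, and a spectral-norm bound on the empirical gradient at $(\beta_0^*,{\bf B}_0)$ are exactly the three ingredients. The substantive difference lies in how the gradient bound $\|n^{-1}\sum_i \epsilon_i^*{\bf X}_i\|_2$ is obtained. The paper does not decompose ${\bf X}_i\epsilon_i^*$ class-conditionally into four pieces. Instead it (i) verifies $E(\epsilon^*{\bf X}) = 0$ by a direct mixture computation---this is the sole role of the shift $\beta_0\mapsto\beta_0^*$---and then (ii) bounds $\|\epsilon_i^*{\bf X}_i\|_2 \le |\epsilon_i^*|\,\|{\bf X}_i\|_2$ crudely: a sub-Gaussian tail gives $|\epsilon_i^*|\le 2\log n$ with probability $1 - Cn^{-2}$, while Anderson's Gaussian comparison inequality combined with concentration of the spectral norm gives $\|{\bf X}_i\|_2 \le (C_\mu + \lambda_u^{1/2})(\sqrt p + \sqrt q + \sqrt{\log n})$ with the same probability. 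A union bound over $i$ followed by the matrix concentration inequality for bounded i.i.d.\ summands then yields the stated threshold for $\omega_n$, the extra $\sqrt{\log n}$ from the last step producing the $(\log n)^{3/2}$. This is coarser than your four-term expansion but sidesteps the quadratic-Gaussian truncation altogether.

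Your route is viable in spirit, but one claim needs correction: the shift $\beta_0\mapsto\beta_0^*$ makes $E[{\bf X}_i\epsilon_i^*] = 0$ \emph{unconditionally}, not conditionally on $G_i$. A single scalar offset cannot annihilate two distinct class-conditional vector means; with ${\bf \mu}_1 = 0$ one has $\text{vec}\,E[{\bf X}_i\epsilon_i^*\mid G_i = 1] = -{\bf \Sigma}\,\text{vec}({\bf B}_0) \ne 0$. Consequently your first term $n^{-1}\sum_i a_{G_i}{\bf \mu}_{G_i}$ does not have zero mean, and the quadratic term $n^{-1}\sum_i {\bf Z}_i\langle{\bf Z}_i,{\bf B}_0\rangle$ must be centered at its nonzero expectation (the matrix reshaping of ${\bf \Sigma}\,\text{vec}({\bf B}_0)$); the biases cancel only after summing all four pieces, so the bookkeeping has to be redone. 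On RSC, the paper is actually less careful than you: it simply takes the curvature from the population second moment ${\bf \Sigma}_{xx}\ge\lambda_l I$ without transferring to the sample, and does not use (A3) for Theorem~\ref{tm1} at all.
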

Theorem \ref{tm1} gives a non-asymptotic risk bound for the proposed estimators. In other words, the results hold for any positive $\omega_n$ satisfying the conditions there. However, in order to ensure the consistency of the proposed estimators, we will need the risk bound to go to $0$, which requires $\omega_n \rightarrow 0$ and $\max(p,q) = o\left(n/(r  \log^3 n)\right)$. If the rank of ${\bf B}_0$ is fixed, then both $p$ and $q$ can diverge with $n$ at the order of $o(n/\log^3 n)$ and their product $p q > n$. This result is compatible with Theorem 1 in \citet{Rask2015}. Note that the estimated intercept $\hat \beta$ converges to $\beta_0^*$, which deviates from the truth $\beta_0$. This is expected because the solution to OLS is only equivalent with LDA's solution in terms of the slope ${\bf B}$, not on $\beta_0$. More precisely, for OLS, by taking the derivative of squared loss function with respect to $\beta_0$ and set it to $0$, we essentially require $E (\epsilon) = 0$. However, this does not hold in our case. Instead we need to shift the residual $\epsilon$ by $d$ to balance off the bias in the cross-product term E$(\epsilon {\bf X})$. The proof of the theorem uses Gaussian comparison inequality which allows us to deal with $\text{vec}({\bf X})$ following a general Gaussian distribution instead of standard Gaussian distribution given that the largest singular value of ${\bf \Sigma}_{xx}$ is bounded. Based on this connection, we further utilize concentration property of spectral norm of Gaussian random matrices. 

Next we show that $\hat{\bf B}$ is rank-consistent under stronger conditions. 

\begin{theorem}\label{thm2}
Suppose that (A1)--(A4) hold. Then the estimate $\hat {\bf B}$ is rank-consistent, that is, $P(\mbox{rank}(\hat {\bf B})=\mbox{rank}({{\bf B}}_0))\rightarrow 1$ as $ n\rightarrow \infty$.
\end{theorem}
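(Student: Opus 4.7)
The plan is to prove rank consistency by adapting the KKT-based argument of \citet{Bach2008} to the setting of diverging matrix dimensions. A minimizer $\hat{\bf B}$ of \eqref{POLS1} satisfies the stationarity condition $\frac{1}{n}\sum_{i=1}^n \hat\epsilon_i {\bf X}_i \in \omega_n \partial \|\hat{\bf B}\|_*$, and the subdifferential of the nuclear norm at a rank-$r$ matrix with SVD $\hat{\bf U}\hat{\bf S}\hat{\bf V}^{\T}$ equals $\{\hat{\bf U}\hat{\bf V}^{\T} + {\bf M} : \hat{\bf U}^{\T}{\bf M} = 0,\ {\bf M}\hat{\bf V} = 0,\ \|{\bf M}\|_2 \le 1\}$. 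A standard convex-analytic argument (cf.\ \citet{Bach2008}) shows that exhibiting a rank-$r$ candidate which satisfies these conditions with strict inequality $\|{\bf M}\|_2 < 1$ forces every minimizer of \eqref{POLS1} to have rank exactly $r$.

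First, I would construct such a candidate by solving the reduced problem obtained by restricting \eqref{POLS1} to matrices of the form ${\bf B} = {\bf U}_0 {\bf C} {\bf V}_0^{\T}$ with ${\bf C} \in \mathbb{R}^{r \times r}$; on this subspace the penalty collapses to $\|{\bf C}\|_*$ by orthogonality of ${\bf U}_0$ and ${\bf V}_0$. Under (A1), Theorem \ref{tm1} applied to the reduced model shows that the resulting estimator $\tilde{\bf C}$ concentrates around the nonsingular matrix $\mathrm{Diag}(S_0)$, so the candidate $\tilde{\bf B} = {\bf U}_0 \tilde{\bf C} {\bf V}_0^{\T}$ has rank exactly $r$ with probability tending to $1$. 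Restricted optimality automatically handles the $M({\bf U}_0, {\bf V}_0)$ component of the KKT equation, yielding $\omega_n \hat{\bf U}\hat{\bf V}^{\T}$ as required.

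The critical step is verifying that the projection of the empirical gradient onto $N({\bf U}_0, {\bf V}_0)$, which equals $\omega_n {\bf M}$ for some matrix ${\bf M}$ determined by the KKT equations, satisfies $\|{\bf M}\|_2 < 1$. Solving the first-order conditions for ${\bf M}$ and using $\epsilon_i = y_i - \beta_0 - \langle {\bf X}_i, {\bf B}_0\rangle$ together with the LDA identification $\text{vec}({\bf B}_0) \propto {\bf \Sigma}^{-1}(\mu_2 - \mu_1)$, the leading deterministic term of ${\bf M}$ is precisely the matrix ${\bf \Lambda}$ of (A2). Assumption (A2) then provides $\|{\bf \Lambda}\|_2 < 1$, and it suffices to prove $\|{\bf M} - {\bf \Lambda}\|_2 = o_P(1)$.

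The main obstacle is this fluctuation control when $p$ and $q$ diverge with $n$. One needs sharp non-asymptotic bounds on spectral norms of $(p-r) \times (q-r)$ random matrices built from Kronecker products ${\bf V}_{0\perp} \otimes {\bf U}_{0\perp}$, the inverse sample covariance $\hat{\bf \Sigma}_{xx}^{-1}$, and cross-terms $\frac{1}{n}\sum_i \epsilon_i {\bf X}_i$ whose entries are mixtures of Gaussians. Spectral-norm concentration for Gaussian random matrices combined with the Gaussian comparison argument used in the proof of Theorem \ref{tm1} yields rates expressed through $\omega_n$, $p$, $q$, $n$, and $\min(p,q)$, and the four rate conditions collected in (A3) are exactly what is needed to make every remainder term $o(1)$ while keeping $\omega_n$ large enough for estimation consistency. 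Assembling these bounds gives $\|{\bf M}\|_2 < 1$ with probability tending to $1$, completing the proof.
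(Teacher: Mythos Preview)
Your primal--dual witness construction has a structural gap that is specific to the nuclear norm and does not arise for the Lasso. The subdifferential of $\|\cdot\|_*$ at a rank-$r$ matrix $\tilde{\bf B}=\hat{\bf U}\,\mathrm{Diag}(\hat S)\,\hat{\bf V}^{\T}$ consists of $\hat{\bf U}\hat{\bf V}^{\T}+{\bf W}$ with $\hat{\bf U}^{\T}{\bf W}=0$, ${\bf W}\hat{\bf V}=0$, $\|{\bf W}\|_2\le 1$; every element therefore lies in $M(\hat{\bf U},\hat{\bf V})\oplus N(\hat{\bf U},\hat{\bf V})$. For your candidate $\tilde{\bf B}={\bf U}_0\tilde{\bf C}{\bf V}_0^{\T}$, restricted optimality over the $r^2$-dimensional subspace $M({\bf U}_0,{\bf V}_0)$ pins down only the ${\bf U}_0^{\T}(\cdot){\bf V}_0$ block of the empirical gradient. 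The two ``mixed'' blocks ${\bf U}_{0\perp}^{\T}\bigl(\tfrac{1}{n}\sum_i\tilde\epsilon_i{\bf X}_i\bigr){\bf V}_0$ and ${\bf U}_0^{\T}\bigl(\tfrac{1}{n}\sum_i\tilde\epsilon_i{\bf X}_i\bigr){\bf V}_{0\perp}$ are in general nonzero and can be absorbed neither by $\hat{\bf U}\hat{\bf V}^{\T}$ (which sits in $M$) nor by the slack ${\bf W}$ (which sits in $N$). Hence $\tilde{\bf B}$ is \emph{not} a stationary point of the full problem, and the ``standard convex-analytic argument'' you invoke does not apply to it. A repaired PDW would have to restrict to the larger subspace $N({\bf U}_0,{\bf V}_0)^\perp$, but then the restricted solution need not have rank $r$ and your appeal to Theorem~\ref{tm1} no longer yields that for free.

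The paper avoids this difficulty by taking a different route, which is in fact the one in \citet{Bach2008}: it does not build a separate witness at all, but analyzes the actual minimizer $\hat{\bf B}$. First it establishes the expansion $\hat{\bf B}={\bf B}_0+\omega_n{\bf \Delta}+o_p(\omega_n)$, where ${\bf \Delta}$ minimizes the deterministic limit $\tfrac12\,\mathrm{vec}({\bf \Delta})^{\T}{\bf \Sigma}\,\mathrm{vec}({\bf \Delta})+\mathrm{tr}({\bf U}_0^{\T}{\bf \Delta}{\bf V}_0)+\|{\bf U}_{0\perp}^{\T}{\bf \Delta}{\bf V}_{0\perp}\|_*$ (Lemma~\ref{prop1}); the rate conditions in (A3) are used precisely here, to control the remainder with $p,q$ diverging. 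Lower semicontinuity of rank then gives $\mathrm{rank}(\hat{\bf B})\ge r$ eventually. For the reverse inequality the paper exploits the KKT characterization of $\hat{\bf B}$ itself (Lemma~\ref{lem:svd}): it suffices to show $\|{\bf U}_c^{\T}\{\hat{\bf \Sigma}_{xx}(\hat{\bf B}-{\bf B}_0)-\hat{\bf \Sigma}_{{\bf X}\epsilon}\}{\bf V}_c\|_2<\omega_n$, where ${\bf U}_c,{\bf V}_c$ are the trailing singular vectors of $\hat{\bf B}$. Plugging in the expansion and using consistency of the singular subspaces together with the closed form of ${\bf \Delta}$ from Lemma~\ref{lem:sic} reduces this quantity to $\omega_n\|{\bf \Lambda}\|_2+o_p(\omega_n)$, and (A2) finishes the argument. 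The mixed blocks never have to vanish exactly because one is working with the true optimizer, for which full stationarity already holds.
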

Similar to Lasso, estimation consistency does not guarantee correct rank estimation for matrix regularization. In fact, the assumptions here are stronger than those in Theorem \ref{tm1}. For example, Theorem \ref{tm1} allows $p + q = o(n/\log^3n)$ while Theorem \ref{thm2} requires $\max(p,q) = o\left( n^{1/3}  \log^{-3/2} n\right )$ if $\min(p,q) = O(1)$. The proof is based on the arguments in \citet{Bach2008} with modifications to allow diverging $p$ and $q$. 

\begin{remark}
Although nuclear norm penalized least squares is used to estimate the classification direction, there is a fundamental difference between our theorems and the theoretical results derived for nuclear norm penalized least squares regression \citep{Bach2008, Nega2012}. The previous work assumes that the data obey a linear regression model with covariates-independent additive noise, which is not true in our case. In particular, the covariates ${\bf X}_i$ and the residuals $\epsilon_i$ are no longer independent in our problem, which brings additional challenges in developing theoretical results. 
\end{remark}
Next we state a classification error consistency result. To be consistent with the notation in the classification literature, for subject $i$, we use $Y_i \in \{-1,1\}$ to denote its true label, $\hat{f}_n({\bf X_i})$ as the classified label for which $\hat{f}_n$ is the classification rule obtained by solving \eqref{POLS1}, and $l(Y_i ,f({\bf X_i})) = I\{Y_i \neq \text{sign}(f({\bf X_i}))\} $ as the 0-1 loss function. Define the risk of $\hat{f}_n$ as $R(\hat{f}_n) = E_{\bf X} l(Y  ,\hat{f}_n({\bf X}))$ and the Bayes risk as $R^* = \inf_f R(f)$. In addition, we assume that the true label $Y_i$ given ${\bf X_i}$ is determined by the linear classification rule with coefficients $\beta_0^*$ and ${\bf B}_0$. Then the following theorem shows that the proposed classifier achieves the Bayes optimal risk under certain conditions. The proof, given in the Appendix B, is based on the general results in \citet{Zhang2004}, where the author studied the optimal Bayes error rate using  a classifier obtained by minimizing a convex upper bound of the classification error function. 
\begin{theorem}\label{tm3}
Assume the same conditions for Theorem \ref{tm1} hold and $\omega_n \rightarrow 0$. Then $R(\hat{f}_n) \rightarrow R^* $ as $n \rightarrow \infty$.
\end{theorem}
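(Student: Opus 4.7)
The plan is to invoke the surrogate-loss comparison theorem of \citet{Zhang2004}: since the squared loss is classification-calibrated, the excess $0$--$1$ risk is dominated by a power of the excess squared-loss risk, giving $R(f) - R^\star \le C[A(f) - A^\star]^{1/2}$ for some constant $C$, where $A(f) = E[(Y - f({\bf X}))^2]$ is the population squared-loss surrogate. Under the Gaussian LDA model of Section~2, the Bayes classifier is linear with slope proportional to ${\bf\Sigma}^{-1}(\mu_2-\mu_1)$, which matches ${\bf B}_0$ as recorded in the comment before Theorem~\ref{tm1}. Hence it suffices to show $A(\hat f_n) - A^\star \to 0$ in probability.

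For surrogate-risk consistency, I would exploit that $A(\beta_0,{\bf B})$ is quadratic in its parameters with Hessian bounded above by $\lambda_u$ under (A1), so
\[
A(\beta_0,{\bf B}) - A^\star \ \le\ \tfrac{1}{2}\lambda_u\bigl(|\beta_0-\beta_0^\star|^2 + \|{\bf B}-{\bf B}_0\|_F^2\bigr).
\]
Plugging $(\hat\beta_0,\hat{\bf B})$ into the right-hand side and invoking Theorem~\ref{tm1} gives an upper bound of order $\omega_n^2 r/\lambda_l$, which vanishes under $\omega_n\to 0$ (for fixed $r$). Because the deployed classifier uses the post-hoc intercept $\tilde\beta_0$ of~\eqref{optimalintercept} rather than $\hat\beta_0$, a parallel argument is needed for $\tilde\beta_0$: $\hat\mu_g \to \mu_g$ by Gaussian concentration, while only the scalar $\text{vec}(\hat{\bf B})^{\T}\hat{\bf\Sigma}\,\text{vec}(\hat{\bf B})$ enters~\eqref{optimalintercept}, whose convergence follows from $\hat{\bf B}\to{\bf B}_0$ together with concentration of a single one-dimensional quadratic form of Gaussian vectors, bypassing the fact that $\hat{\bf\Sigma}$ is singular when $pq>n$.

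Combining the two pieces yields $A(\hat f_n) - A^\star \to 0$ in probability, and Zhang's comparison inequality then yields $R(\hat f_n) \to R^\star$. The main obstacle is the intercept step: since $\hat{\bf\Sigma}$ cannot be consistently estimated in operator norm when $pq>n$, a generic continuous-mapping argument is unavailable, and one must exploit that only a single scalar functional of $\hat{\bf\Sigma}$ enters~\eqref{optimalintercept}. A secondary technical point, shared with the proof of Theorem~\ref{tm1}, is the dependence between ${\bf X}_i$ and the residual $\epsilon_i$ under the LDA data-generating mechanism; this dependence rules out a black-box application of standard surrogate-risk consistency results and is precisely why the shifted intercept $\beta_0^\star$ has to be tracked throughout.
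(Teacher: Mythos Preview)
Your approach coincides with the paper's at the top level: both invoke the surrogate-loss comparison of \citet{Zhang2004} and then feed in parameter consistency from Theorem~\ref{tm1}. The paper's proof applies Zhang's Corollary~3.1 in the form $R(\hat f_n)\le R^\star+2c(\epsilon_1+\epsilon_2)^{1/s}$, argues $\epsilon_2\to 0$ from consistency of $(\hat\beta_0,\hat{\bf B})$, and sets $\epsilon_1=0$. Your discussion of the deployed intercept $\tilde\beta_0$ and the scalar functional $\text{vec}(\hat{\bf B})^{\T}\hat{\bf\Sigma}\,\text{vec}(\hat{\bf B})$ is extra care the paper does not take; its proof works only with $(\hat\beta_0,\hat{\bf B})$.

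There is, however, a real gap in your reduction. You write $R(f)-R^\star\le C[A(f)-A^\star]^{1/2}$ and then bound $A(\hat\beta_0,\hat{\bf B})-A^\star\le\tfrac12\lambda_u\bigl(|\hat\beta_0-\beta_0^\star|^2+\|\hat{\bf B}-{\bf B}_0\|_F^2\bigr)$. The quadratic bound is valid only if $A^\star=A(\beta_0^\star,{\bf B}_0)$, i.e., if the \emph{global} minimizer of the squared surrogate is linear. But Zhang's inequality requires $A^\star=\inf_f A(f)$ over all measurable $f$, which for squared loss is attained at $f^\star(x)=E[Y\mid{\bf X}=x]=2P(Y{=}1\mid{\bf X}{=}x)-1$. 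Under the Gaussian LDA model of Section~2 this is a logistic (not linear) function of the linear discriminant, so $A(\beta_0^\star,{\bf B}_0)>A^\star$ and your bound does not control $A(\hat f_n)-A^\star$. Your observation that the Bayes \emph{classifier} is linear is correct but does not help: calibration compares to the surrogate minimizer $f^\star$, not to the best linear rule. The paper handles exactly this point by inserting an additional hypothesis just before the theorem statement---``the true label $Y_i$ given ${\bf X}_i$ is determined by the linear classification rule with coefficients $\beta_0^\star$ and ${\bf B}_0$''---and uses it to force $\epsilon_1=\inf_f E\{2P(Y{=}1\mid{\bf X})-1-f({\bf X})\}^2=0$. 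Without that extra assumption (which is not part of the theorem statement you were given), the approximation-error term does not vanish and the argument is incomplete.
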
 
\section{Numerical results}

\subsection{Simulation}
We conduct simulation studies to evaluate the numerical performance of our proposed method. We compare its performance with that of a few alternatives: ``Lasso LDA'', which adopts a naive Lasso penalty in LDA without taking into account matrix structure, the regularized matrix logistic regression \citep{zhou2014} using nuclear norm and Lasso penalties, denoted by ``Logistic Nuclear'' and ``Logistic Lasso'', and the penalized matrix discriminant analysis (PMDA) approach proposed by \citet{zhong2015matrix}. We generate $n \in \{100,200,500\}$ samples from two classes with weights $(\pi_1,\pi_2) \in \{(0.5,0.5),(0.75,0.25)\}$. For each class, we generate predictors from a bivariate normal distribution with means $\mu_g$, $g = 1, 2$, and covariance ${\bf \Sigma}$. We set $\mu_1 = 0$ and $\mu_2 = {\bf \Sigma} \text{vec}({\bf B}_0)$. The covariance matrix ${\bf \Sigma}$ has a 2D autoregressive structure: $ {\rm cov}({\bf x}_{i_1,j_1}, {\bf x}_{i_2,j_2})=0.5^{|i_1-i_2| + |j_1-j_2|}$ for $ 1\leq i_1\leq p $ and $ 1\leq j_1\leq q $.  The true signal ${\bf B}_0$ is generated based on a 64-by-64 image. We consider three settings: a cross, a triangle and a butterfly. These pictures are shown in Figure \ref{fig2}(a) respectively. In particular, the white color denotes value 0 and black denotes 0$.$05. We apply each fitted model to an independent test data set of size $1000$ and summarize the misclassification rates based on 1000 Monte Carlo replications. The results are contained in Table \ref{tb1}.

\begin{table} 
\centering
\def~{\hphantom{0}}
\caption{Simulation results: misclassification rates ($\%$) and associated standard errors obtained from our method, Lasso LDA, Logistic Nuclear (L-Nuclear), Logistic Lasso (L-Lasso) and penalized matrix discriminant analysis (PMDA) based on 1000 Monte Carlo replications. }{%
\begin{tabular}{lccccccc} \hline
Shape &n&$(\pi_1, \pi_2)$& Ours & Lasso LDA& L-Nuclear &L-Lasso& PMDA\\[5pt]
Cross & 100&(0$.$5,0$.$5)&3$.$65(0$.$02)  &17$.$81(0$.$07) &3$.$70(0$.$02)&19$.$51(0$.$07)&*\\
 & 100&(0$.$75,0$.$25) & 3$.$32(0$.$02) &14$.$89(0$.$05) &6$.$62(0$.$04) &18$.$84(0$.$04)&* \\
 & 200 &(0$.$5,0$.$5)&3$.$22(0$.$02) &11$.$69(0$.$05) &3$.$26(0$.$02) &13$.$39(0$.$05)&26$.$93(0$.$05) \\
 & 200 &(0$.$75,0$.$25)&2$.$87(0$.$02) &9$.$89(0$.$04) &4$.$14(0$.$03) &16$.$27(0$.$04)&19$.$58(0$.$08) \\
 & 500 &(0$.$5,0$.$5)&3$.$09(0$.$02) & 6$.$97(0$.$03) &3$.$11(0$.$02) &8$.$19(0$.$04)&25$.$17(0$.$04) \\
 & 500 &(0$.$75,0$.$25)&2$.$62(0$.$02) &5$.$81(0$.$03) &3$.$59(0$.$02) &14$.$91(0$.$03)&12$.$05(0$.$04) \\[ 10pt]
Triangle & 100&(0$.$5,0$.$5)  &3$.$12(0$.$02) &15$.$73(0$.$06) &3$.$11(0$.$02) &17$.$70(0$.$07)&* \\
 & 100&(0$.$75,0$.$25) & 2$.$66(0$.$02)&13$.$72(0$.$05) &6$.$10(0$.$04) &17$.$19(0$.$04)&* \\
 & 200 &(0$.$5,0$.$5)&2$.$85(0$.$02)  &9$.$90(0$.$04) &2$.$81(0$.$02) &11$.$81(0$.$04)& 30$.$17(0$.$08)\\
 & 200 &(0$.$75,0$.$25)&2$.$43(0$.$02)  &8$.$72(0$.$03) &3$.$62(0$.$02) &13$.$40(0$.$04)&24$.$63(0$.$10) \\
 & 500 &(0$.$5,0$.$5)&2$.$67(0$.$02)  & 5$.$67(0$.$03)&2$.$73(0$.$02) &6$.$96(0$.$03)&25$.$92(0$.$04)\\
& 500 &(0$.$75,0$.$25)&2$.$29(0$.$01) &4$.$89(0$.$02) &2$.$74(0$.$02) &9$.$97(0$.$03) &14$.$69(0$.$05)\\ [ 10pt]
Butterfly & 100&(0$.$5,0$.$5)  &3$.$86(0$.$02) &17$.$10(0$.$06) &4$.$16(0$.$02) &18$.$82(0$.$07)&* \\
 & 100&(0$.$75,0$.$25) & 3$.$47(0$.$02)&14$.$79(0$.$05) &7$.$14(0$.$04) &17$.$78(0$.$04)&* \\
 & 200 &(0$.$5,0$.$5)&3$.$67(0$.$02)  &11$.$00(0$.$04) &3$.$78(0$.$02)&12$.$66(0$.$05)&29$.$79(0$.$07)\\
 & 200 &(0$.$75,0$.$25)&3$.$26(0$.$02) &9$.$80(0$.$04) &4$.$50(0$.$02) &13$.$93(0$.$04)&23$.$83(0$.$09) \\
 & 500 &(0$.$5,0$.$5)&3$.$56(0$.$02)  & 6$.$50(0$.$03) &3$.$52(0$.$02) &7$.$70(0$.$03)&25$.$77(0$.$04) \\
 & 500&(0$.$75,0$.$25) & 3$.$02(0$.$02)&5$.$74(0$.$03) &3$.$51(0$.$02) &10$.$49(0$.$03)&14$.$66(0$.$05) \\ \hline
\end{tabular}}
\label{tb1}
\end{table} 
The results show that our method performs much better than ``Lasso LDA" and ``Logistic Lasso" under all scenarios. This is expected because these two methods ignore the matrix structure. For ``Logistic Nuclear'', it has similar misclassification rates with our method for balanced data, but does not perform as good as ours for unbalanced data. We have also plotted the estimates using nuclear norm and $\ell_1$-norm from one randomly selected Monte Carlo replicate in Figure \ref{fig2}(b)(c). It can be seen that the proposed nuclear norm regularization is much better than $\ell_1$-regularization in  recovering the matrix signal in different shapes. By comparing the recovery of different shapes in Column (b) in Figure \ref{fig2}, we find that our method works better for cross than for triangle and butterfly. This is expected since triangle and butterfly do not have the low rank structure.  

We also compare the performance of our method with that of PMDA proposed by \citet{zhong2015matrix}. In Table \ref{tb1}, it can be seen that our proposed method has a lower mis-classification rate under all scenarios. This is because we allow flexible values of the rank for the linear discriminant direction $ {\bf B} $, while in \citet{zhong2015matrix}, their assumption is equivalent to assuming $ {\bf B} $ is of rank 1. In particular, using their notation, for binary case, their direction $ {\bf B}=\boldsymbol\beta_1\boldsymbol\xi^{\T}$, where $\boldsymbol\beta_1 \in \mathbb{R}^{p}$ and $\boldsymbol\xi\in \mathbb{R}^{q}$. Since the true ranks of $ {\bf B} $ in our simulation studies are all of rank greater than 1, it is not surprising that our method outperforms PMDA. Moreover, PMDA does not apply to the case where $n < p + q$, i.e., the sample size is far smaller than the summation of image dimensions. Therefore, their method does not apply to one of our simulation settings $(n,p,q)=(100,64,64)$ and we mark their results using $*$ in Table \ref{tb1}. We also compare the computation time between PMDA and our method. In simulation, when $n=200$ and true signal is a cross, given a fixed regularization parameter, the system running time of PMDA is around 1.5 minutes whereas the system running time of our method is no more than 13 seconds. Here system running time is measured on a Macbook Pro laptop with a 2.9 GHz Intel Core i5. This is because PMDA essentially solves least square problems with $L_1$ penalty in each iteration when setting $\omega_1 =0$ in Algorithm 2 in \citet{zhong2015matrix}. Our method is based on the Nesterov optimal gradient method which avoids computing inverse of covariance matrix and hence has a faster convergence rate.

\begin{figure}[h]
\begin{tabular}{ccc}
\includegraphics[width=0.3\linewidth]{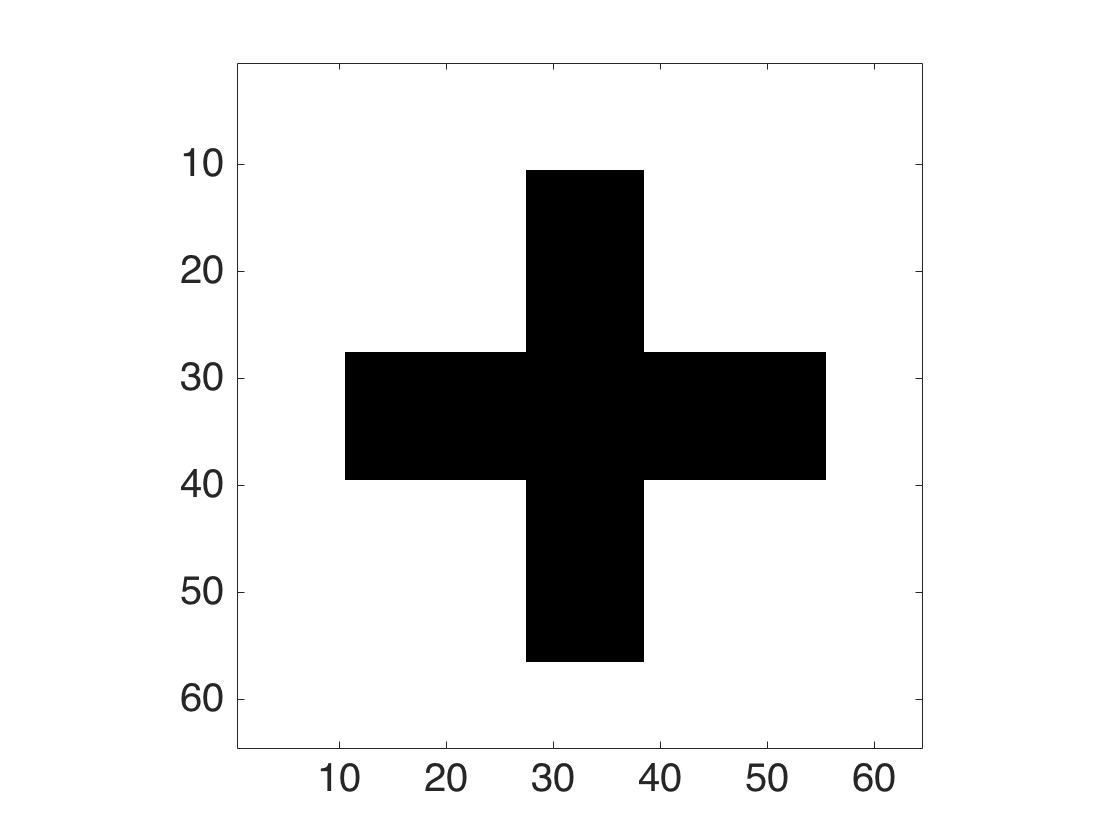}\label{(a)}&
\includegraphics[width=0.3\linewidth]{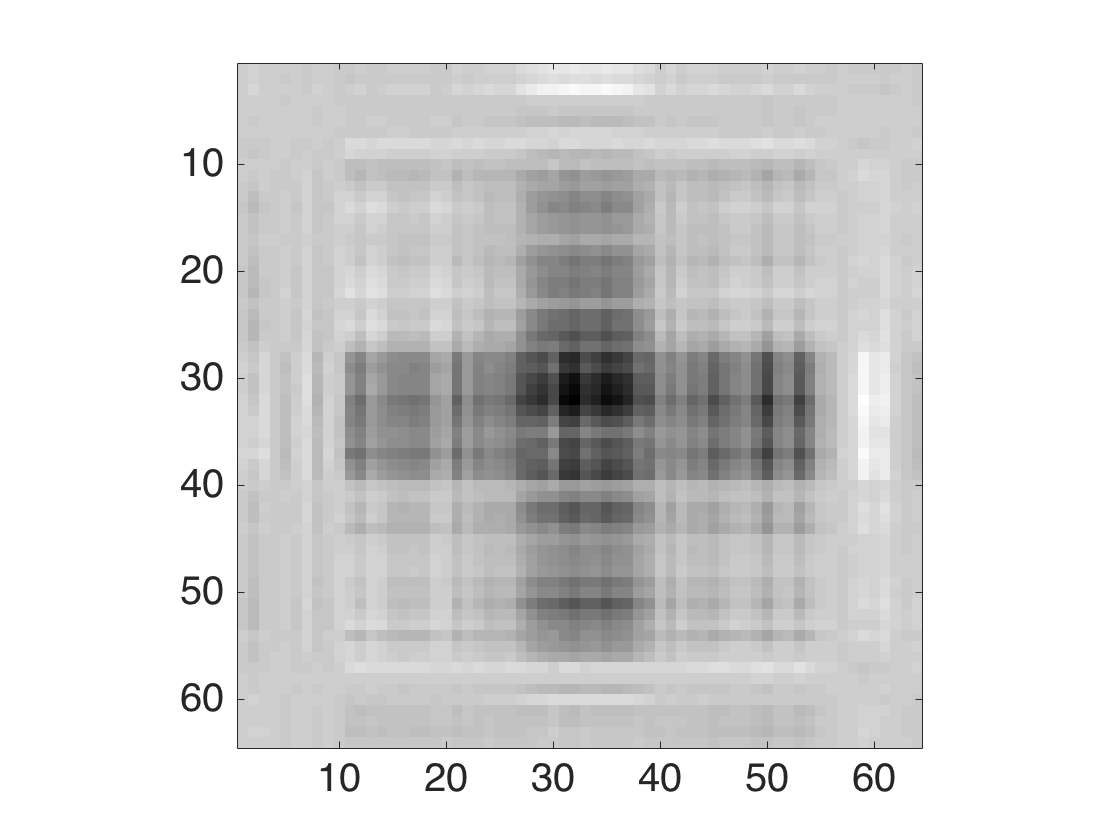}\label{(b)}&
\includegraphics[width=0.3\linewidth]{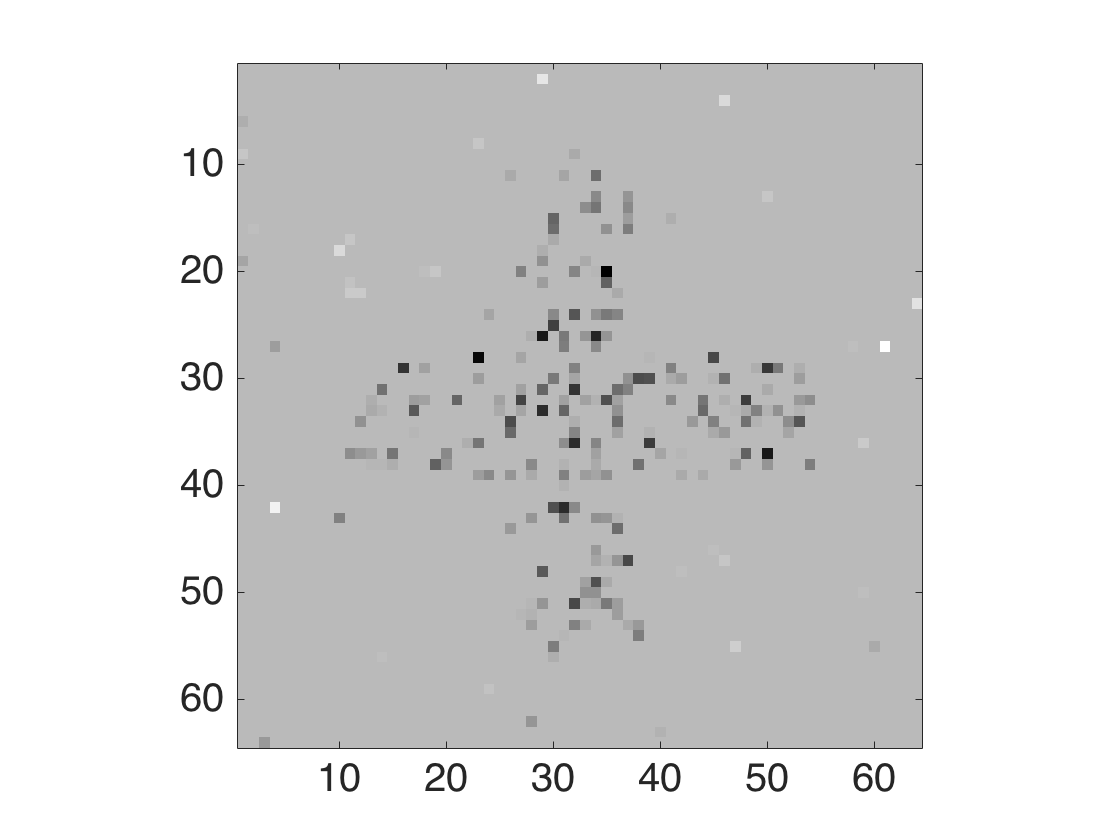}\label{(c)}\\
\includegraphics[width=0.3\linewidth]{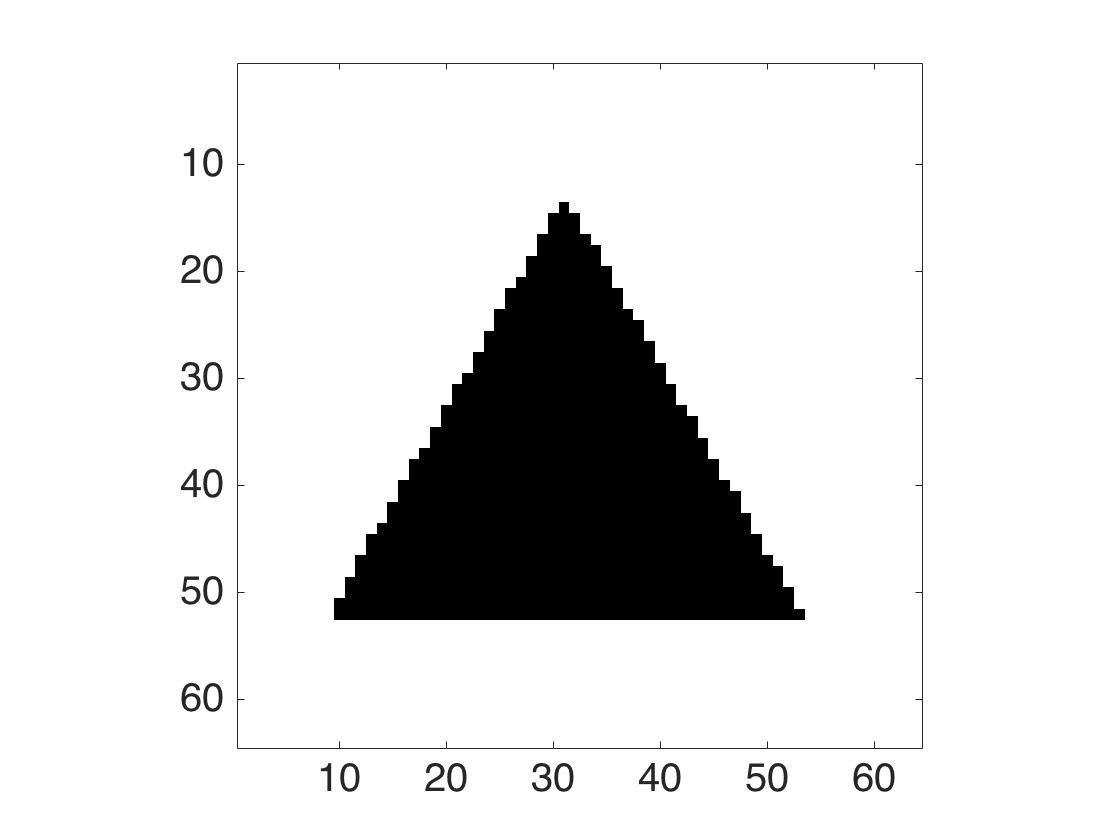}\label{(a)}&
\includegraphics[width=0.3\linewidth]{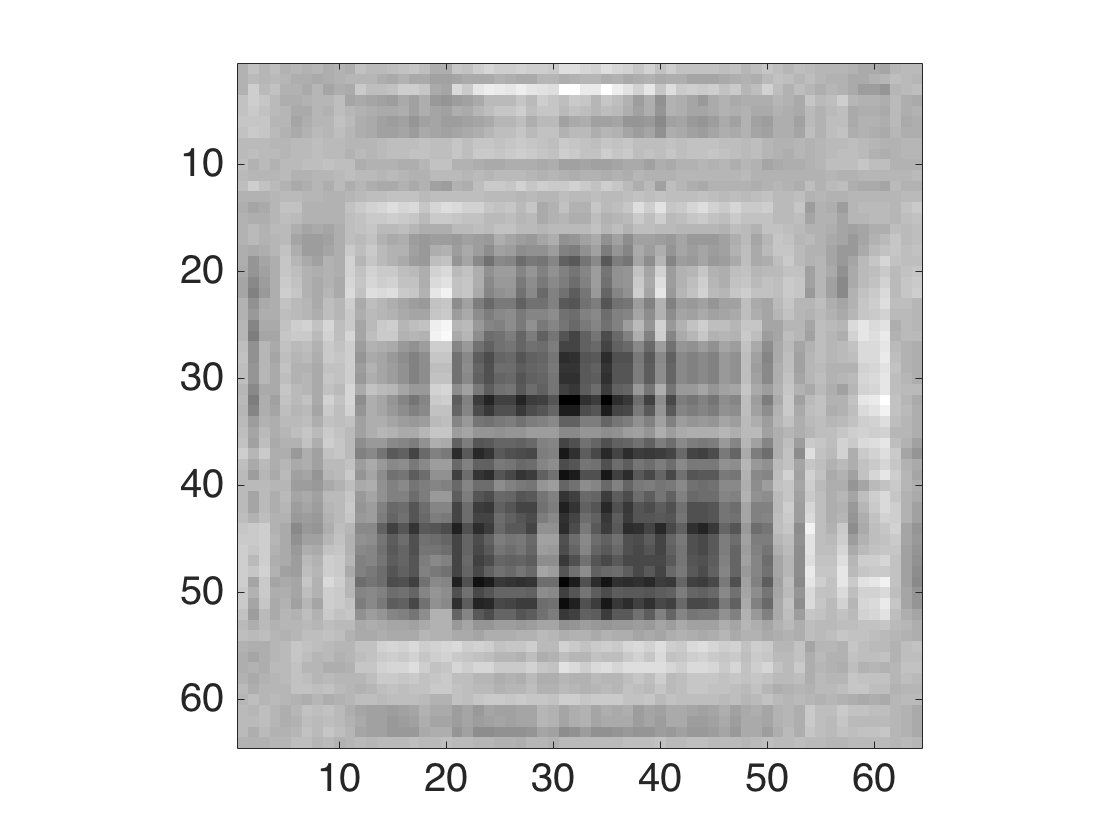}\label{(b)}&
\includegraphics[width=0.3\linewidth]{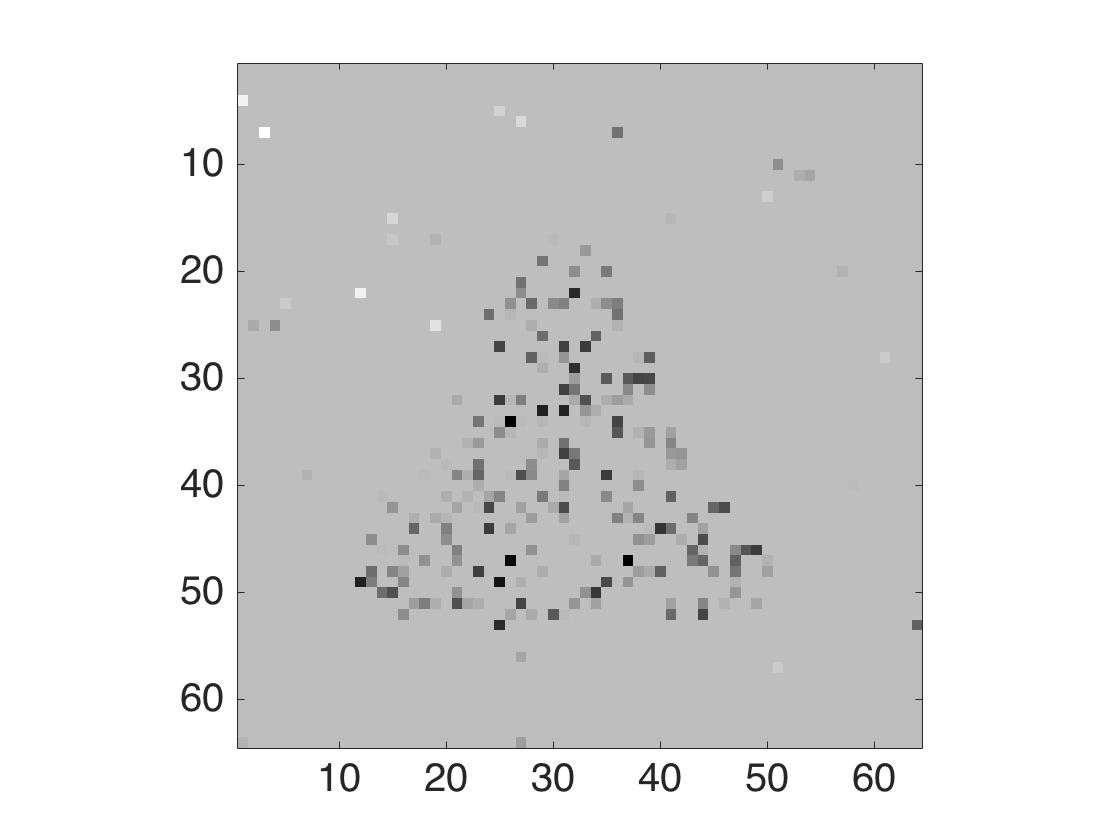}\label{(c)}\\
\includegraphics[width=0.3\linewidth]{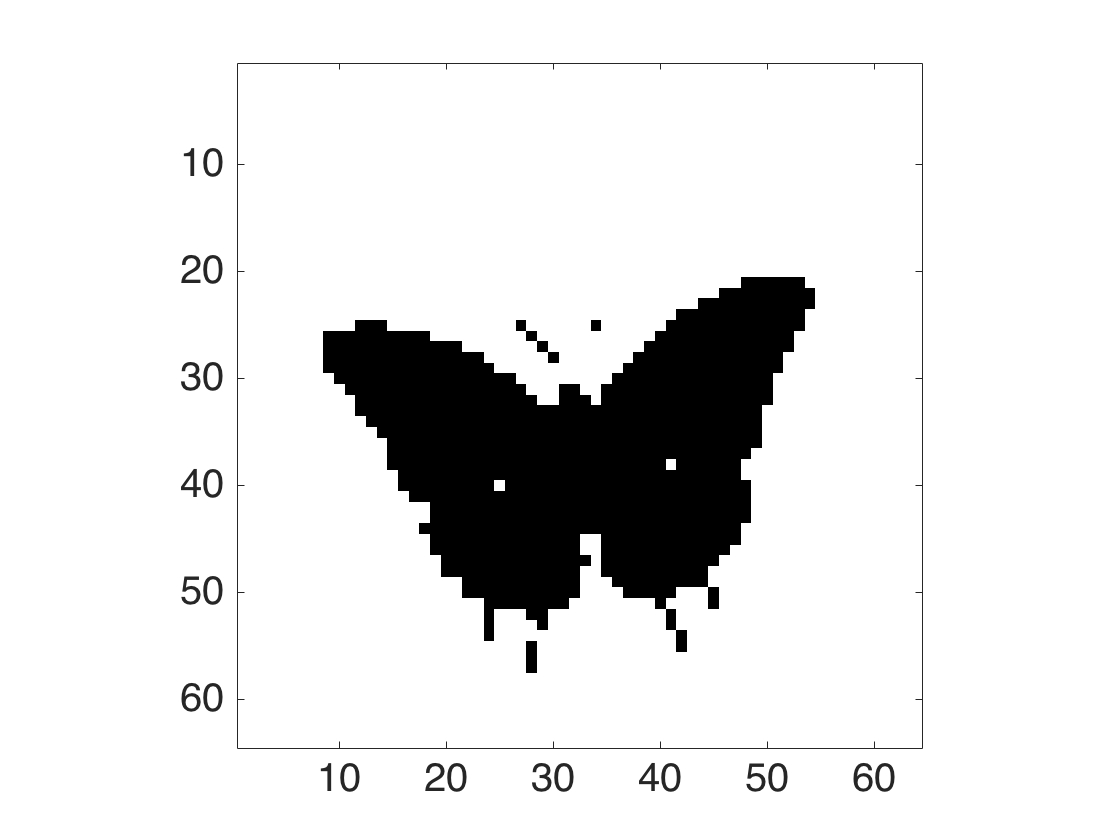}\label{(a)}&
\includegraphics[width=0.3\linewidth]{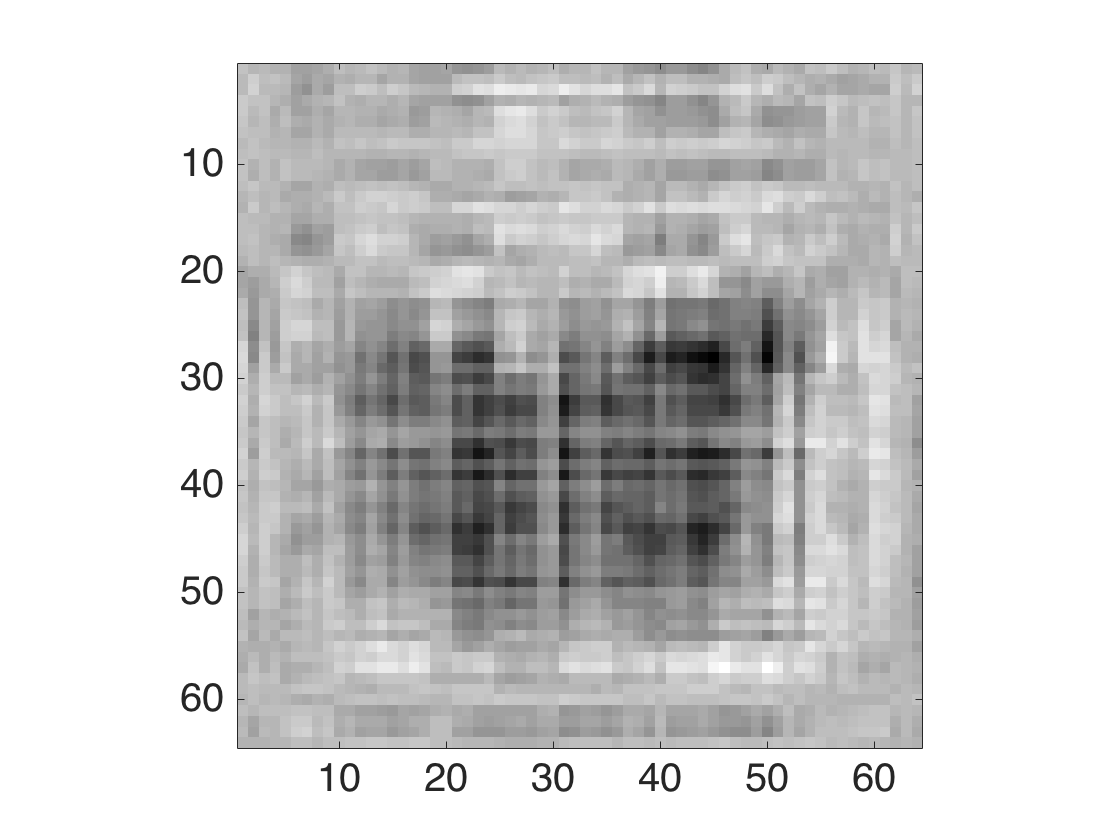}\label{(b)}&
\includegraphics[width=0.3\linewidth]{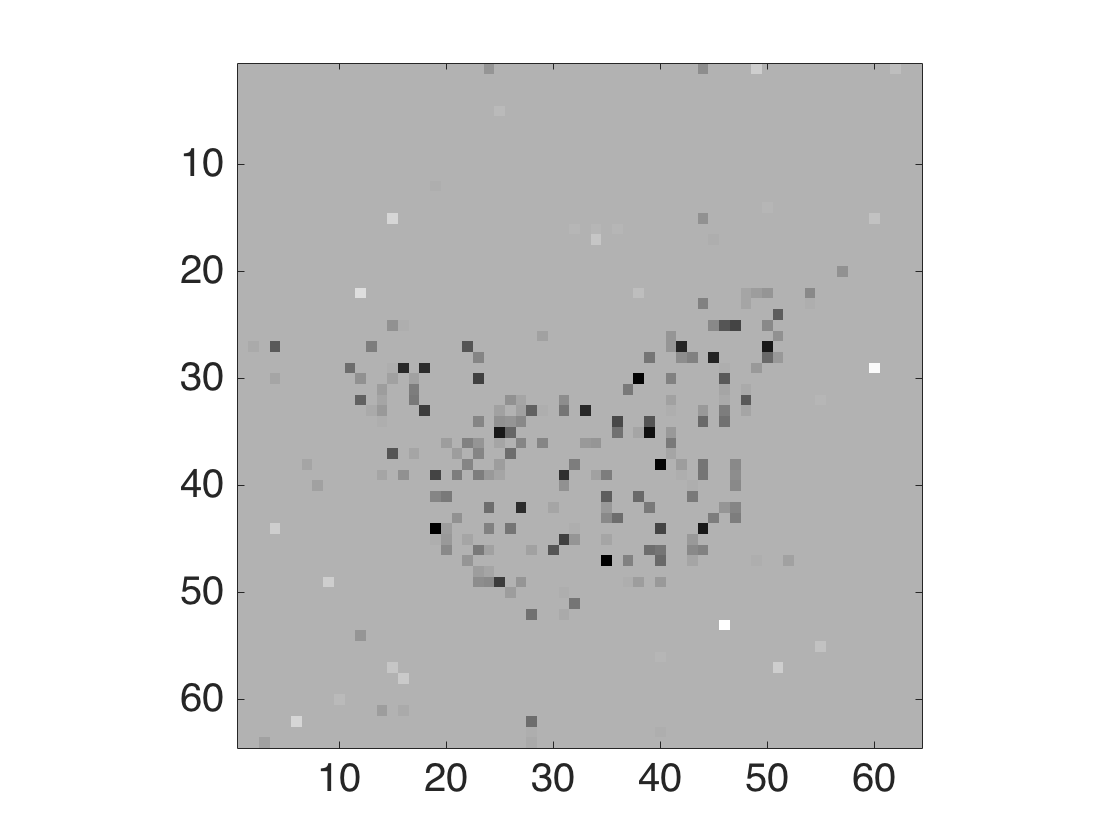}\label{(c)}\\
(a)&(b)&(c)\\
\end{tabular}
\caption{The figures for cross image: (a) original signal; (b) our nuclear regularization estimate; (c) $\ell_1$-regularized estimate.}\label{fig2}
\end{figure}

\subsection{Real data application}
We apply our method to an EEG dataset, which is available at  \url{https://archive.ics.uci.edu/ml/datasets/EEG+Database}. The data was collected by the Neurodynamics Laboratory to study the EEG correlates of genetic predisposition to alcoholism. It contained measurements from 64 electrodes placed on each subject's scalps sampled at 256 Hz (3$.$9-msec epoch) for 1 second.  Each subject was exposed to three stimuli: a single stimulus, two matched stimuli, two unmatched stimuli. Among the 122 subjects in the study, 77 were alcoholic individuals and 45 were controls. More details about the study can be found in \citet{zhang1995}. In statistics literature, EEG data has been analyzed using different models, for example, \citet{Gao20191} considered an unsupervised approach for clustering EEG data, \citet{Gao20192} and \citet{Gao2018} considered an evolutionary state-space model and graphical model for better understanding brain connectivity, respectively. However, these methods are not directly applicable for classification purpose here. 

In our data analysis, for each subject, we use the average of all 120 runs for each subject under single-stimulus condition and use that as the covariate ${\bf x}_i$, which is a $256\times64$ matrix. The classification label is \emph{alcoholic} or not. 
We randomly divide the data set into training set of 81 subjects and test set of 41 subjects for 100 times, and each time fit the model on the training set and apply it on the test set to obtain the average mis-classification rate and its standard error. The results for different methods are summarized in Table \ref{tb2}. It can be seen that the proposed method has a significant lower mis-classification rate compared with other methods, which agrees with the simulation findings for the unbalanced data. PMDA does not work here since $p+q >n $ ($(n,p,q)=(122, 256, 64)$). We also check the fitted signal matrix and it agrees well with the one obtained by \citet{zhou2014}. 

In terms of computational efficiency,  we measured the computation time among Lasso LDA, Logistic Nuclear, Logistic Lasso and our method based on one evaluation of the data, that is, partitioning the data into training and test sets, fitting the model on the training set and applying it on the test set. The running time for Lasso LDA, Logistic Nuclear, Logistic Lasso and our method is 0.67s, 1.79s, 1.27s and 1.87s, respectively. The system running time is measured in Matlab R2015b on a Macbook Pro  laptop with a 2.9 GHz Intel Core i5.

\begin{table}
\centering
\def~{\hphantom{0}}
\caption{EEG data analysis: misclassification rates ($\%$) and associated standard errors.}{%
\begin{tabular}{lrrrr}
Our method & Lasso LDA & Logistic Nuclear& Logistic Lasso & PMDA\\
22$.$20(0$.$53)&24$.$12(0$.$70)&24$.$44(0$.$80)&26$.$24(0$.$91)&*
\end{tabular}}
\label{tb2}
\end{table}

\section{Discussion}
In the literature, total variation (TV) regularization has also been commonly used for modeling image data in addition to the proposed nuclear norm regularization. Their focuses are slightly different -- the former is on structured sparse pattern and the later is on low-rank pattern. The main reason that we choose to focus on the nuclear norm regularization in this paper is because we have found that low rankness is a more reasonable assumption than sparseness assumption in our real data application. In particular, the mis-classification errors of our method is lower than the sparse method (LASSO) in our real data analysis. The TV regularization is an interesting direction to explore as it requires new computational algorithms and theories; and thus we leave this for the future research.  

In this paper, we only consider the situation where all the image measurements are taking at the same scale, that is, the dimension of the image covariates $p$ and $q$ are equal for every study subject. We believe this is the case for most applications. For the special cases when image dimensions vary across subjects, our method may still be applicable by first resizing the image to the same scale. It will be of future interest to develop flexible statistical methods to handle image data that can be of different sizes in general.

\section*{Acknowledgment}
The authors would like to thank the Editor, Associate Editor and two reviewers for their constructive comments, which have substantially improved the paper. Shen's research is partially supported by Simons Foundation Award 512620 and NSF DMS-1509023. Zhou's research is partially supported by NIH grants R01HG006139, R01GM53275 and NSF DMS-1310319. Kong's research
is partially supported by the Natural Science and Engineering Research Council of Canada.

\appendix
\section{Primary lemmas and propositions}\label{app:theorem}


 
We start with some useful lemmas in this section. The proof of main theorems are given in the Appendix B. 

\noindent
 
We first re-state a singular value thresholding formula in \citet{cai2010singular}. This result is extremely useful when computing optimal solution of  \eqref{POLS}, by which the important block of Nestorov's algorithm was formed. The proof is based on showing that $0$ is one of subgradients of \eqref{svdobject} at $\hat {\bf B}$. 
\begin{proposition}\label{prop:svdthredsholding}
For any $\omega \geq 0$ and a given matrix ${\bf B}_0 \in \mathcal R^{p\times q}$ with singular value decomposition $U\text{diag}(s) V^{\T}$, the minimizer $\hat {\bf B}$ of
\begin{equation}
\frac{1}{2}\|{\bf B} - {\bf B}_0\|_F^2 + \omega \|{\bf B}\|_\ast
\label{svdobject}
\end{equation}
has the same singular vectors as $B_0$ with singular values $(s_i - \omega)_{+}$.
\end{proposition}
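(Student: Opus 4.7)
The plan is to recognize the objective as a strictly convex function in $\bf B$ (the Frobenius term is strictly convex, the nuclear norm is convex), so a minimizer exists uniquely and is characterized by the first-order optimality condition
\[
0 \;\in\; (\hat{\bf B} - {\bf B}_0) \;+\; \omega\,\partial\|\hat{\bf B}\|_\ast.
\]
Thus it is enough to propose the candidate $\hat{\bf B} = {\bf U}\,\mathrm{diag}((s_i - \omega)_+)\,{\bf V}^{\T}$ and check that ${\bf B}_0 - \hat{\bf B} \in \omega\,\partial\|\hat{\bf B}\|_\ast$.

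Next I would recall the well-known characterization of the subdifferential of the nuclear norm: writing the thin SVD of any matrix $\bf A$ as ${\bf A} = {\bf U}_1\,\mathrm{diag}(\sigma)\,{\bf V}_1^{\T}$ with $\sigma_i>0$, one has
\[
\partial\|{\bf A}\|_\ast = \bigl\{ {\bf U}_1 {\bf V}_1^{\T} + {\bf W} \;:\; {\bf U}_1^{\T}{\bf W} = 0,\; {\bf W}{\bf V}_1 = 0,\; \|{\bf W}\|_2 \leq 1 \bigr\}.
\]
I would split the columns of ${\bf U}$ and ${\bf V}$ into blocks: $({\bf U}_1,{\bf V}_1)$ indexed by $\{i : s_i > \omega\}$ and $({\bf U}_2,{\bf V}_2)$ by $\{i : s_i \leq \omega\}$. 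Then the candidate $\hat{\bf B}$ has thin SVD ${\bf U}_1\,\mathrm{diag}(s_i-\omega)\,{\bf V}_1^{\T}$, so the first factor in the subdifferential is $ {\bf U}_1 {\bf V}_1^{\T}$.

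The verification step is then a short calculation:
\[
{\bf B}_0 - \hat{\bf B} \;=\; {\bf U}\,\mathrm{diag}\bigl(s_i - (s_i-\omega)_+\bigr)\,{\bf V}^{\T} \;=\; \omega\,{\bf U}_1 {\bf V}_1^{\T} \;+\; {\bf U}_2\,\mathrm{diag}(s_i)_{s_i\le\omega}\,{\bf V}_2^{\T}.
\]
Setting ${\bf W} = \omega^{-1}{\bf U}_2\,\mathrm{diag}(s_i)_{s_i\le\omega}\,{\bf V}_2^{\T}$, I would then verify the three membership requirements: orthogonality ${\bf U}_1^{\T}{\bf W}=0$ and ${\bf W}{\bf V}_1=0$ follow from the orthonormality of the blocks, while $\|{\bf W}\|_2 = \max_{s_i\le\omega} s_i/\omega \leq 1$. (The case $\omega = 0$ is immediate, and if no $s_i \leq \omega$ one simply takes ${\bf W}=0$.) This yields ${\bf B}_0 - \hat{\bf B} = \omega({\bf U}_1{\bf V}_1^{\T} + {\bf W}) \in \omega\,\partial\|\hat{\bf B}\|_\ast$, which is precisely the optimality condition.

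Since the objective is strictly convex, this is not only sufficient but identifies the unique minimizer, completing the proof. The only mildly non-trivial step is the characterization of $\partial\|\cdot\|_\ast$, which is standard (e.g. Watson, 1992) and can simply be cited; everything else is a direct algebraic check on block-diagonal matrices.
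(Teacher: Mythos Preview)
Your proposal is correct and follows exactly the approach the paper indicates: the paper merely states that the proof ``is based on showing that $0$ is one of subgradients of \eqref{svdobject} at $\hat{\bf B}$'' and cites \citet{cai2010singular}, and you have carried out precisely this subgradient verification in full detail, including the standard characterization of $\partial\|\cdot\|_\ast$ and the block decomposition needed to exhibit the required ${\bf W}$.
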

\noindent

Next we state a lemma on the risk bound. This result can be viewed as an analog of Theorem 1 in \citet{Nega2012} under our situation. 
\begin{lemma}\label{lemma:nega}
Suppose that (A1) and (A2) hold, and $\omega_n \geq 2 \|\frac{1}{n} \sum_{i=1}^n \epsilon_i {\bf X}_i \|_2$. Then any optimal solution $\hat{{\bf B}} $ to 
\begin{align} \label{POLS}
(\hat\beta_0, \hat {\bf B}) = \argmin_{\beta_0, {\bf B}}\frac{1}{2n} \sum_{i=1}^n \Big(y_i - \beta_0 - \langle {\bf X}_i, {\bf B} \rangle \Big)^2 + \omega_n\|{\bf B}\|_\ast
\end{align}
satisfies the bound
\begin{align*}
\|\hat{{\bf B}}  - {\bf B}_0\|_F^2 \leq 9\frac{\omega_n^2}{\lambda_l} r.
\end{align*}
\end{lemma}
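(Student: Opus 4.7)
The plan is to adapt the master bound of \citet{Nega2012} (their Theorem 1) to the nuclear-norm penalized least squares formulation \eqref{POLS}, with the decomposable regularizer/subspace pair $(M({\bf U}_0, {\bf V}_0), N({\bf U}_0, {\bf V}_0))$ already identified in the main text. Writing $\boldsymbol{\Delta} = \hat{\bf B} - {\bf B}_0$ and $\delta = \hat\beta_0 - \beta_0$, I would first expand the squared loss and exploit the optimality of $(\hat\beta_0, \hat{\bf B})$, discarding the nonnegative quadratic term on the left, to obtain the canonical basic inequality
\[
\tfrac{1}{2n}\sum_{i=1}^n\bigl(\delta+\langle {\bf X}_i,\boldsymbol\Delta\rangle\bigr)^2 + \omega_n\bigl(\|\hat{\bf B}\|_\ast - \|{\bf B}_0\|_\ast\bigr) \le \tfrac{1}{n}\sum_{i=1}^n \epsilon_i\bigl(\delta+\langle{\bf X}_i,\boldsymbol\Delta\rangle\bigr).
\]

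Second, I would control the random cross term via the duality between the nuclear norm and the spectral norm. The matrix piece satisfies
\[
\Bigl|\Bigl\langle \tfrac{1}{n}\sum_{i=1}^n \epsilon_i {\bf X}_i,\, \boldsymbol\Delta\Bigr\rangle\Bigr| \le \Bigl\|\tfrac{1}{n}\sum_{i=1}^n \epsilon_i {\bf X}_i\Bigr\|_2 \|\boldsymbol\Delta\|_\ast \le \tfrac{\omega_n}{2}\|\boldsymbol\Delta\|_\ast
\]
by the hypothesis on $\omega_n$, while the intercept piece $\delta\bar\epsilon$ is handled either by profiling $\beta_0$ out of the loss (so $\epsilon_i$ is replaced by its centered version) or by noting that it contributes only to the quadratic lower bound side.

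Third, I would invoke decomposability of the nuclear norm with respect to $(M,N)$: letting $\boldsymbol\Delta_M$ and $\boldsymbol\Delta_{M^\perp}$ denote the corresponding projections, the triangle inequality together with ${\bf B}_0\in M$ and decomposability yield
\[
\|\hat{\bf B}\|_\ast - \|{\bf B}_0\|_\ast \;\ge\; \|\boldsymbol\Delta_{M^\perp}\|_\ast - \|\boldsymbol\Delta_M\|_\ast .
\]
Substituting this into the basic inequality and dropping the squared term gives the cone condition $\|\boldsymbol\Delta_{M^\perp}\|_\ast \le 3\|\boldsymbol\Delta_M\|_\ast$, i.e., $\boldsymbol\Delta \in \mathcal{C}(M,N;{\bf B}_0)$. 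Because $\boldsymbol\Delta_M$ has rank at most $2r$, the standard subspace compatibility estimate then yields $\|\boldsymbol\Delta\|_\ast \le 4\|\boldsymbol\Delta_M\|_\ast \le 4\sqrt{2r}\,\|\boldsymbol\Delta\|_F$.

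Finally, I would apply the restricted strong convexity implied by (A1): on the rank-constrained cone derived above, the empirical quadratic form $\tfrac{1}{2n}\sum \langle {\bf X}_i,\boldsymbol\Delta\rangle^2$ is bounded from below by $\tfrac{\lambda_l}{2}\|\boldsymbol\Delta\|_F^2$, since $\lambda_l$ is the minimum eigenvalue of the population Hessian ${\bf \Sigma}_{xx}$. Combining the lower quadratic bound with the upper bound $\tfrac{3\omega_n}{2}\|\boldsymbol\Delta\|_\ast \le 6\omega_n\sqrt{2r}\,\|\boldsymbol\Delta\|_F$ gives a quadratic inequality in $\|\boldsymbol\Delta\|_F$, which after consolidating constants reproduces the $9\omega_n^2 r/\lambda_l$ bound of the Negahban et al.\ master theorem. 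The main obstacle, and the point where our setting genuinely departs from the standard framework, is that the residuals $\epsilon_i$ and the designs ${\bf X}_i$ are \emph{not} independent here (cf.\ the Remark after Theorem \ref{tm1}). This is inconsequential for the duality step, which only requires a spectral-norm bound on $\tfrac{1}{n}\sum \epsilon_i {\bf X}_i$, but it complicates transferring the population-level strong convexity constant $\lambda_l$ to the empirical loss. I would handle this by conditioning on the class labels $G$ and exploiting the Gaussianity of $\text{vec}({\bf X}_i)\mid G$ to establish, via a concentration argument on the low-rank cone, that the empirical Hessian dominates $\lambda_l \boldsymbol I/2$ on $\mathcal{C}$ up to a tolerance that is absorbed into the constant $9$.
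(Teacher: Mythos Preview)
Your approach is essentially the same as the paper's: both invoke Theorem~1 of \citet{Nega2012} for decomposable regularizers, checking that the nuclear norm is decomposable with respect to the pair $(M,N)$, that the dual norm is the spectral norm (so the hypothesis $\omega_n\ge 2\|n^{-1}\sum\epsilon_i{\bf X}_i\|_2$ is exactly the Negahban condition on $\lambda_n$), and that the subspace compatibility constant is $\sqrt r$. The paper's proof is considerably terser---it simply cites the master theorem and identifies $\kappa_{\mathcal L}=\lambda_l^{1/2}$, $\tau_{\mathcal L}=0$, and $\psi(M)\le\sqrt r$---whereas you have unpacked the basic-inequality/cone/RSC mechanics from scratch.

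One remark: the concern you flag at the end, that the population curvature $\lambda_l$ must be transferred to the empirical quadratic form, is real, but the paper's own proof does not address it either; it simply asserts that the Hessian is $\text{E}\{\text{vec}({\bf X})\text{vec}({\bf X})^{\T}\}=\Sigma_{xx}\ge\lambda_l I$ and takes $\kappa_{\mathcal L}=\lambda_l^{1/2}$ at the population level. So your proposed conditioning-and-concentration step on the low-rank cone goes beyond what the paper actually argues at this lemma, and the paper's stated bound $9\omega_n^2 r/\lambda_l$ (with the exact constant $9$ and compatibility $\sqrt r$ rather than $\sqrt{2r}$) is obtained by reading the constants directly from the Negahban et al.\ statement rather than through the explicit chain of inequalities you write out.
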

\begin{proof} 
We apply Theorem 1 in \citet{Nega2012} to our situation. Observe that the nuclear norm is decomposable, and the squared error loss satisfies $\tau_{\mathcal{L}}({\bf B}_0) = 0$ in that paper. Moreover, the dual norm $\mathcal{R}^*$ to the nuclear norm is simply the spectral norm. The curvature constant $\kappa_{\mathcal{L}}$ in the restricted strong convexity (RSC) condition can be chosen as $\lambda_l^{1/2}$ because the squared error loss is used and the Hessian matrix $\text E\{\text{vec}({\bf X})\text{vec}({\bf X})^{\T}\} = {\bf \Sigma}_{xx} \geq \lambda_l I$.  For a subspace $M$ that contains matrices of the rank at most $r$, its subspace compatibility constant satisfies
\[
\psi(M) = \sup_{{\bf U}\in M\setminus \{0\}} \frac{\|{\bf U}\|_\ast}{\|{\bf U}\|_F} = \sup_{{\bf U}\in M\setminus \{0\}}\frac{\sum_{i=1}^r\sigma_i({\bf U})}{(\sum_{i=1}^r \sigma_i({\bf U})^2)^{1/2}} \leq \sqrt r,
\]
where the last inequality follows by Cauchy-Schwarz inequality. Hence subspace compatibility constant under the low-rank assumption (A2) is bounded by $\sqrt r$. 
\end{proof}

Next we state a few commonly used lemmas regarding the concentration property and tail probability inequalities of Gaussian and sub-Gaussian random variable (matrices).  Their proofs can be found in standard textbooks, e.g., \citet{wainwright2015high}. 
\begin{lemma}\label{lemma:Hoeffding}
(Hoeffding bound) Suppose that the variables ${\bf X}_i$, $i = 1, 2, \ldots, n$ are independent and $X_i$ has mean $\mu_i$ and sub-Gaussian parameter ${\bf \Sigma}_i$. Then for all $t\geq 0$, we have
\[
P\left(\sum_{i=1}^n({\bf X}_i - \mu_i) \geq t\right) \leq \exp(-\frac{t^2}{2\sum_{i=1}^n{\bf \Sigma}_i^2})
\]
\end{lemma}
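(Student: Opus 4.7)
The plan is to prove this tail bound via the classical Chernoff (exponential Markov) method, which is the canonical route for concentration of independent sub-Gaussian sums. First I would recenter by letting $Z_i = {\bf X}_i - \mu_i$, so that each $Z_i$ is mean-zero and sub-Gaussian with parameter ${\bf \Sigma}_i$, and the problem reduces to bounding $P(\sum_{i=1}^n Z_i \geq t)$.

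Second, I would fix an arbitrary $\lambda > 0$, exponentiate, and apply Markov's inequality: $P(\sum_i Z_i \geq t) = P(e^{\lambda \sum_i Z_i} \geq e^{\lambda t}) \leq e^{-\lambda t}\, E[e^{\lambda \sum_i Z_i}]$. Independence then factorizes the moment generating function as $\prod_{i=1}^n E[e^{\lambda Z_i}]$, and the defining property of a sub-Gaussian random variable gives $E[e^{\lambda Z_i}] \leq \exp(\lambda^2 {\bf \Sigma}_i^2/2)$ for every $\lambda \in \mathbb{R}$. Combining these bounds yields $P(\sum_i Z_i \geq t) \leq \exp\!\bigl(-\lambda t + \tfrac{\lambda^2}{2} \sum_{i=1}^n {\bf \Sigma}_i^2\bigr)$.

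Third, I would optimize the right-hand side over $\lambda > 0$. The exponent is a quadratic in $\lambda$ minimized at $\lambda^\ast = t/\sum_{i=1}^n {\bf \Sigma}_i^2$, and substituting this back produces exactly $-t^2/\bigl(2\sum_{i=1}^n {\bf \Sigma}_i^2\bigr)$, which matches the stated inequality.

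There is really no hard step here: the argument is a textbook application of the Chernoff bound, which is why the authors simply refer the reader to \citet{wainwright2015high} rather than write it out. The only point requiring a moment of care is a convention check on the phrase ``sub-Gaussian parameter ${\bf \Sigma}_i$''; the lemma as displayed is consistent with the convention that ${\bf \Sigma}_i^2$ is the variance proxy in the MGF bound $E[e^{\lambda Z_i}] \leq e^{\lambda^2 {\bf \Sigma}_i^2/2}$, and under that reading the three-line calculation above delivers the result with no further subtlety.
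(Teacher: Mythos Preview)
Your proposal is correct and is precisely the standard Chernoff/exponential-Markov argument that the paper has in mind; the paper does not write out a proof at all but simply defers to \citet{wainwright2015high}, whose treatment is exactly the three-step derivation you give. There is nothing to add.
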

\begin{lemma}\label{lemma:matrix-con}
Assume ${\bf X}_1,\ldots,{\bf X}_n \in \mathbb{R}^{p \times q}$ are i.i.d. random matrices. Suppose that $\|{\bf X}_1 \|_2 \leq M$ almost surely, then with probability greater than $1 - \delta$, 
\begin{align*}
\left\|\frac{1}{n} \sum_{i=1}^n {\bf X}_i    - E {\bf X}_1  \right\|_2 \leq \frac{6M}{\sqrt{n}} \left( \sqrt{\log \min(p,q)} + \sqrt{\log (1/\delta)}\right)
\end{align*}
\end{lemma}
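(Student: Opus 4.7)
The plan is to apply a matrix Bernstein--type concentration inequality to a centered version of the sum. Set $Y_i = X_i - E X_1$, which are i.i.d.\ mean-zero $p \times q$ random matrices. By the triangle inequality together with Jensen's inequality, $\|Y_i\|_2 \leq \|X_i\|_2 + \|E X_1\|_2 \leq 2M$ almost surely, so the centered matrices inherit a hard spectral bound. The quantity to control is
\[
\Big\|\tfrac{1}{n}\sum_{i=1}^n X_i - E X_1\Big\|_2 = \tfrac{1}{n}\Big\|\sum_{i=1}^n Y_i\Big\|_2.
\]

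First I would compute the matrix variance parameter
\[
v(n) = \max\Big\{\Big\|\sum_{i=1}^n E(Y_i Y_i^{\T})\Big\|_2,\; \Big\|\sum_{i=1}^n E(Y_i^{\T} Y_i)\Big\|_2\Big\},
\]
and bound each factor via operator-norm Jensen, $\|E(Y_iY_i^{\T})\|_2 \leq E\|Y_iY_i^{\T}\|_2 = E\|Y_i\|_2^2 \leq 4M^2$, to conclude $v(n) \leq 4nM^2$. Then I would invoke the matrix Bernstein inequality for rectangular matrices (e.g., Tropp 2012, Theorem 1.6), which yields
\[
P\Big(\Big\|\sum_{i=1}^n Y_i\Big\|_2 \geq t\Big) \leq d \exp\Big(-\frac{t^2/2}{v(n) + 2Mt/3}\Big),
\]
for a dimensional factor $d$ depending on $p,q$. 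Setting the right-hand side equal to $\delta$, solving the resulting quadratic in $t$, dividing by $n$, and applying $\sqrt{a+b} \leq \sqrt{a}+\sqrt{b}$ to split the subgaussian contribution gives a bound of the claimed form $\tfrac{CM}{\sqrt{n}}(\sqrt{\log d} + \sqrt{\log(1/\delta)})$, plus a lower-order Bernstein correction of order $M\log(d/\delta)/n$ that is absorbed into the leading term under the implicit scaling $M/\sqrt{n} = O(1)$.

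The main technical obstacle is pinning down the precise dimensional factor $d$ to match the stated $\log\min(p,q)$, rather than the more customary $\log(p+q)$ produced by the Hermitian dilation step, as well as extracting the sharp constant $6$. The refinement to $\min(p,q)$ can be recovered by exploiting the invariance of the spectral norm under transposition, applying the concentration bound to the Gram side $Y_i Y_i^{\T}$ (or $Y_i^{\T} Y_i$) of smaller dimension $\min(p,q)\times\min(p,q)$ via a matrix Chernoff-style argument. The remaining steps---inversion of the tail bound, separation of the variance and Bernstein terms, and constant chasing---are routine algebra and can be found in standard references on non-asymptotic random matrix theory such as Wainwright (2019).
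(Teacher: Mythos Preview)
The paper does not actually prove this lemma; it states only that the proof ``can be found in standard textbooks, e.g., \citet{wainwright2015high}.'' Your proposal---centering, bounding the matrix variance crudely by the almost-sure spectral bound, applying the rectangular matrix Bernstein inequality, and inverting the tail---is precisely the textbook route that reference develops, so there is nothing substantive to compare.

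One remark on the technical point you flag: you are right that the Hermitian dilation in the standard rectangular Bernstein inequality naturally produces $\log(p+q)$ rather than $\log\min(p,q)$, and your suggested workaround through the Gram matrix is not quite enough to recover the stated dependence in general. Since the paper only ever invokes this lemma with $\delta=n^{-1}$ and immediately coarsens $\sqrt{\log\min(p,q)}+\sqrt{\log n}$ to $2\sqrt{\log n}$ (see the display at the end of the proof of Theorem~\ref{tm1}), the distinction between $\min(p,q)$ and $p+q$ is immaterial for the downstream results; for the purposes of this paper the weaker $\log(p+q)$ version suffices and your argument delivers it cleanly.
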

\begin{lemma}\label{lemma:gordon} Let ${\bf A}$ be an $p\times q$ matrix whose entries are independent standard normal random variables. Denote $s_{\min}({\bf A})$ and $s_{\max}({\bf A})$   as smallest singular value and largest singular value of $\bf A$ respectively.  Assume $p\geq q$ without loss of generality. Then
\[
\sqrt p  - \sqrt q  \leq E_{s_\text{min}}({\bf A}) \leq  E_{s_\text{max}}({\bf A})\leq \sqrt p +\sqrt q.
\]
\end{lemma}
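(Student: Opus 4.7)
My plan is a direct application of Gaussian process comparison, a standard technique for singular values of Gaussian matrices. Writing the two singular values variationally,
\[
s_{\max}(A) = \sup_{u \in S^{p-1},\, v \in S^{q-1}} u^{\T} A v, \qquad s_{\min}(A) = \inf_{v \in S^{q-1}} \sup_{u \in S^{p-1}} u^{\T} A v
\]
(the second valid because $p \geq q$), the middle inequality of the lemma is immediate from $s_{\min} \leq s_{\max}$, so it remains to bound the two extremes against $\sqrt p \pm \sqrt q$.

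The engine is a comparison between the indexing Gaussian field $X_{u,v} := u^{\T} A v$ and the auxiliary field $Y_{u,v} := g^{\T} u + h^{\T} v$, where $g \sim N(0, I_p)$ and $h \sim N(0, I_q)$ are independent of each other and of $A$. A direct covariance calculation yields
\[
E(X_{u,v} - X_{u',v'})^2 = 2 - 2(u^{\T}u')(v^{\T}v'), \qquad E(Y_{u,v} - Y_{u',v'})^2 = 4 - 2 u^{\T}u' - 2 v^{\T}v',
\]
and subtracting gives $E(Y - Y')^2 - E(X - X')^2 = 2(1 - u^{\T}u')(1 - v^{\T}v') \geq 0$, with equality on the axis-slices $u = u'$ or $v = v'$. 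Thus $Y$ strictly dominates $X$ in increments off the slices and matches it exactly along each slice.

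For the $s_{\max}$ bound, the global increment domination $E(X - X')^2 \leq E(Y - Y')^2$ is exactly the hypothesis of Sudakov--Fernique, so $E\, s_{\max}(A) \leq E \sup_{u,v} Y_{u,v} = E\|g\|_2 + E\|h\|_2$, which is at most $\sqrt p + \sqrt q$ by Jensen applied to $E\|g\|_2^2 = p$ and $E\|h\|_2^2 = q$. For the $s_{\min}$ bound, I would invoke Gordon's Gaussian min--max inequality; the within-slice equality of increments combined with the across-slice strict increase of $Y$ is precisely its hypothesis when $v$ plays the role of the outer (inf) index, so $E\, s_{\min}(A) = E \inf_v \sup_u X_{u,v} \geq E \inf_v \sup_u Y_{u,v} = E\|g\|_2 + E \inf_{v} h^{\T} v = E\|g\|_2 - E\|h\|_2$.

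The final step, which I regard as the main technical obstacle, is converting this into $\sqrt p - \sqrt q$: naively $E\|g\|_2 < \sqrt p$ and $E\|h\|_2 < \sqrt q$, so the bound cannot be obtained term-by-term. I would instead show that the gap $n \mapsto \sqrt n - E\|z_n\|_2$ is monotonically non-increasing for $z_n \sim N(0, I_n)$, using the explicit chi-distribution formula $E\|z_n\|_2 = \sqrt 2 \,\Gamma((n+1)/2)/\Gamma(n/2)$ together with log-convexity of the Gamma function (via a Gautschi-type inequality on ratios $\Gamma((n+1)/2)/\Gamma(n/2)$). Monotonicity combined with $p \geq q$ then rearranges to $E\|g\|_2 - E\|h\|_2 \geq \sqrt p - \sqrt q$, completing the proof.
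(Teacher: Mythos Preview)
The paper does not prove this lemma; it lists it among several standard concentration results with the remark that ``their proofs can be found in standard textbooks, e.g., \citet{wainwright2015high}.'' Your argument is precisely the classical Slepian--Fernique/Gordon comparison proof that such references contain, so there is nothing substantive to compare against. The increment computations, the Sudakov--Fernique step for the upper bound on $E\,s_{\max}$, and the application of Gordon's min--max inequality for the lower bound on $E\,s_{\min}$ are all correct as written.

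You are also right that the final inequality $E\|g\|_2 - E\|h\|_2 \geq \sqrt p - \sqrt q$ is the only place requiring real work, and your proposed route via monotonicity of $n \mapsto \sqrt n - E\chi_n$ is valid; the identity $E\chi_n \cdot E\chi_{n+1} = n$, immediate from the Gamma-ratio formula, reduces the check to a single Gautschi/Kershaw-type bound. Many expositions sidestep the issue entirely by stating the lower bound as $E\|g\|_2 - E\|h\|_2$ rather than $\sqrt p - \sqrt q$; the two differ by $O(p^{-1/2})$, which is immaterial for the paper's only use of the lemma (bounding $E\|{\bf Z}\|_2$ in the proof of Theorem~\ref{tm1}).
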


\begin{lemma}\label{lemma:concentration}
Let ${\bf Y}\sim N(0,I_{d\times d})$ be a d-dimensional Gaussian random variable. Then for any function F: $\mathcal{R}^d \to \mathcal R$ with Lipschitz constant L, i.e. $|F({\bf x}) - F({\bf y})| \leq L\|{\bf x} - {\bf y}\|$ for all ${\bf x}, {\bf y} \in \mathcal R^d$, we have
\[
P\left\{|F({\bf Y}) - E(F({\bf Y}))| \geq t\right\} \leq 2\exp(-\frac{t^2}{2L^2}),
\] for any $t>0$.
\end{lemma}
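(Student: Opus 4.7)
The plan is to establish this via the classical Gaussian log-Sobolev inequality combined with Herbst's argument, which together yield sub-Gaussian tails for any Lipschitz function of a standard Gaussian, and then to convert the moment generating function bound into a tail bound via Markov.

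First I would reduce to the smooth case: by standard mollification (convolve $F$ with a shrinking Gaussian kernel), any $L$-Lipschitz $F$ is the pointwise limit of $C^\infty$ functions with gradient bounded almost everywhere by $L$, and the tail inequality passes to the limit. Next I would invoke (or prove) the Gaussian log-Sobolev inequality: for any smooth $g > 0$ with $E[g({\bf Y})] < \infty$,
\[
E[g({\bf Y})\log g({\bf Y})] - E[g({\bf Y})]\log E[g({\bf Y})] \;\leq\; \tfrac{1}{2}\, E\!\left[\frac{|\nabla g({\bf Y})|^2}{g({\bf Y})}\right].
\]

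Then I would run Herbst's argument. Set $\phi(\lambda) = E[e^{\lambda F({\bf Y})}]$ and apply log-Sobolev with $g = e^{\lambda F}$, using $|\nabla e^{\lambda F/2}|^2 = (\lambda^2/4)|\nabla F|^2 e^{\lambda F} \leq (\lambda^2 L^2/4) e^{\lambda F}$. This yields the differential inequality
\[
\lambda \phi'(\lambda) - \phi(\lambda)\log \phi(\lambda) \;\leq\; \tfrac{\lambda^2 L^2}{2}\,\phi(\lambda).
\]
Dividing through by $\lambda^2 \phi(\lambda)$ rewrites this as $\frac{d}{d\lambda}\!\big[\lambda^{-1}\log \phi(\lambda)\big] \leq L^2/2$, and integrating from $0$ to $\lambda > 0$ (with $\lambda^{-1}\log \phi(\lambda) \to E[F({\bf Y})]$ as $\lambda \downarrow 0$) gives the sub-Gaussian MGF bound $\log \phi(\lambda) \leq \lambda\, E[F({\bf Y})] + \lambda^2 L^2/2$. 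A Chernoff/Markov step then produces
\[
P\{F({\bf Y}) - E[F({\bf Y})] \geq t\} \;\leq\; \inf_{\lambda > 0} e^{-\lambda t + \lambda^2 L^2/2} \;=\; e^{-t^2/(2L^2)},
\]
and repeating the argument with $-F$ (also $L$-Lipschitz) followed by a union bound produces the factor of $2$ in the stated inequality.

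The main obstacle is proving the Gaussian log-Sobolev inequality itself; once that is in hand, everything above is a short computation. My plan for that step would be to establish the one-dimensional case first (e.g., via the spectral/semigroup properties of the Ornstein--Uhlenbeck operator, or by a two-point Bernoulli inequality followed by a central limit passage to the Gaussian), and then tensorize to $d$ dimensions using the subadditivity of entropy for product measures. The tensorization step is purely algebraic once the $d=1$ inequality is secured, so essentially all of the analytic work is confined to one dimension.
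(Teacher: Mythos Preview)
Your proof is correct and follows one of the standard routes (Gaussian log-Sobolev plus Herbst's argument) to Gaussian Lipschitz concentration. The paper, however, does not prove this lemma at all: it is stated among a list of ``commonly used lemmas'' whose proofs are deferred to standard references (e.g., Wainwright's high-dimensional statistics text). So your approach is not so much \emph{different} from the paper's as it is \emph{more complete}: you supply an argument where the paper simply cites. What your route buys is self-containment and an explanation of where the sub-Gaussian constant $L^2$ in the exponent comes from; what the paper's route buys is brevity, since the result is entirely classical and not the focus of the work.
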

 
\begin{lemma}\label{lemma:comparison}
(Anderson's comparison inequality \citep{anderson1955integral}) Let ${\bf X}$ and ${\bf Y}$ be zero-mean Gaussian random vectors with covariance ${\bf \Sigma}_{\bf X}$ and ${\bf \Sigma}_{\bf Y}$ respectively. If ${\bf \Sigma}_{\bf X} - {\bf \Sigma}_{\bf Y}$ is positive semi-definite then for any convex symmetric set C, 
\[
P({\bf X} \in C) \leq P({\bf Y} \in C).
\]
\end{lemma}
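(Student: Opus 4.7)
The plan is to prove this via the classical reduction to Anderson's theorem on unimodal symmetric densities. The starting point is the observation that since $\boldsymbol{\Sigma}_{\bf X} - \boldsymbol{\Sigma}_{\bf Y}$ is positive semi-definite, we can construct a Gaussian vector ${\bf Z} \sim N(0, \boldsymbol{\Sigma}_{\bf X} - \boldsymbol{\Sigma}_{\bf Y})$ independent of a copy ${\bf Y}'$ of ${\bf Y}$, and set ${\bf X}' = {\bf Y}' + {\bf Z}$. Then ${\bf X}'$ has the same law as ${\bf X}$, so it suffices to bound $P({\bf Y}' + {\bf Z} \in C)$.

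Next I would condition on ${\bf Z}$ and write
\begin{equation*}
P({\bf X} \in C) = E\bigl[P({\bf Y}' \in C - {\bf Z} \mid {\bf Z})\bigr] = E\bigl[P({\bf Y} + {\bf Z} \in C \mid {\bf Z})\bigr].
\end{equation*}
The goal is then to show that for every deterministic vector ${\bf z}$, the translated probability $P({\bf Y} + {\bf z} \in C)$ is maximized at ${\bf z} = 0$, i.e.\ $P({\bf Y} + {\bf z} \in C) \leq P({\bf Y} \in C)$. Taking expectation in ${\bf Z}$ then yields the conclusion.

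The main step — and really the only nontrivial ingredient — is this shift inequality, which is exactly Anderson's theorem (Anderson, 1955): if $f$ is a non-negative Borel function on $\mathbb{R}^d$ that is symmetric (${f(-{\bf x})=f({\bf x})}$) and has convex level sets $\{{\bf x} : f({\bf x}) \geq t\}$, and if $C$ is a convex symmetric set, then $\int_C f({\bf x}+{\bf z})\,d{\bf x}$ is a non-increasing function of $\|{\bf z}\|$ along rays, and in particular is maximized at ${\bf z}=0$. I would verify that the density $f_{\bf Y}$ of the centered Gaussian ${\bf Y}$ satisfies these hypotheses: symmetry follows from ${\bf Y} \stackrel{d}{=} -{\bf Y}$, and convexity of the level sets follows from log-concavity of the Gaussian density, since the level sets are sublevel sets of the positive semi-definite quadratic form ${\bf x}^{\T}\boldsymbol{\Sigma}_{\bf Y}^{+}{\bf x}$ (handling the degenerate case by restricting to the range of $\boldsymbol{\Sigma}_{\bf Y}$).

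Applying Anderson's theorem with ${\bf z}$ replaced by $-{\bf Z}(\omega)$ pointwise gives $P({\bf Y} \in C - {\bf Z}(\omega)) \leq P({\bf Y} \in C)$ for almost every $\omega$, and integrating over ${\bf Z}$ completes the proof. The hard part is really just the Anderson shift inequality itself, which I would cite rather than re-derive; the Gaussian convolution argument above is the standard short bridge from that result to the comparison inequality stated here.
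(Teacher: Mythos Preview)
Your argument is correct and is exactly the classical route to this inequality: represent ${\bf X}$ in distribution as ${\bf Y}+{\bf Z}$ with an independent $N(0,\boldsymbol{\Sigma}_{\bf X}-\boldsymbol{\Sigma}_{\bf Y})$ increment, condition on ${\bf Z}$, and invoke Anderson's shift inequality for symmetric unimodal densities on convex symmetric sets. The paper does not actually prove this lemma; it simply cites \citet{anderson1955integral} and refers the reader to standard textbooks, so there is nothing to compare against beyond noting that what you wrote is the textbook proof the paper has in mind.
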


The following lemma is very useful in establishing rank estimation consistency. 
\begin{lemma}\label{prop1}
Assume (A1) and (A2) hold.  Let $\hat {\bf B}$ be a global minimizer of \eqref{POLS}. If $n^{1/2}\omega_n$ tends to $+\infty$ and $\omega_n$ tends to zero, then $\omega_n^{-1}(\hat {\bf B} - {\bf B}_0)$ converges in probability to the unique global minimizer ${\bf \Delta}$ of 
\[
\min_{{\bf \Delta} \in \mathrm{R}^{p\times q} }\frac{1}{2}\text{vec}({\bf \Delta})^{\T}{\bf \Sigma} \text{vec}({\bf \Delta}) +
\text{tr} \{{\bf U}_0^{\T}{\bf \Delta} {\bf V}_0\} + \|{\bf U}_{0\perp}^{\T} {\bf \Delta} {\bf V}_{0\perp}\|_\ast.
\]
Moreover, $\hat {\bf B} = {\bf B}_0 + \omega_n{\bf \Delta} + O_p\big (\omega_n\min(p,q)n^{-1/2} + \min(p,q) n^{-1/2} +\omega_n^{2}\min(p,q)^{1/2}n^{-1/2}\big )$.
\end{lemma}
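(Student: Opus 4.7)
The first step is a standard reparametrization. Profiling out the intercept via its closed-form first-order condition, $\hat{\bf B}$ minimizes the sample-centered problem $\frac{1}{2n}\sum_i(\tilde y_i - \langle \tilde{\bf X}_i, {\bf B}\rangle)^2 + \omega_n\|{\bf B}\|_\ast$, where tildes denote centering by the sample mean. Substituting ${\bf B} = {\bf B}_0 + \omega_n{\bf \Delta}$, dividing by $\omega_n^2$, and dropping terms constant in ${\bf \Delta}$, I find that $\hat{\bf \Delta}_n := \omega_n^{-1}(\hat{\bf B} - {\bf B}_0)$ is the minimizer of
\begin{equation*}
G_n({\bf \Delta}) = \tfrac{1}{2}\text{vec}({\bf \Delta})^{\T}\hat{\bf \Sigma}_n\text{vec}({\bf \Delta}) - \omega_n^{-1}\langle {\bf \Delta}, \bar{\bf g}_n\rangle + \omega_n^{-1}\bigl(\|{\bf B}_0 + \omega_n {\bf \Delta}\|_\ast - \|{\bf B}_0\|_\ast\bigr),
\end{equation*}
with $\hat{\bf \Sigma}_n$ the sample second-moment matrix of $\text{vec}(\tilde{\bf X}_i)$ and $\bar{\bf g}_n = n^{-1}\sum_i \tilde\epsilon_i \tilde{\bf X}_i$. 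The candidate limit is
\begin{equation*}
G({\bf \Delta}) = \tfrac{1}{2}\text{vec}({\bf \Delta})^{\T}{\bf \Sigma}\text{vec}({\bf \Delta}) + \text{tr}\{{\bf U}_0^{\T}{\bf \Delta}{\bf V}_0\} + \|{\bf U}_{0\perp}^{\T}{\bf \Delta}{\bf V}_{0\perp}\|_\ast.
\end{equation*}

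Next I would verify three convergences: (a) $\hat{\bf \Sigma}_n \to {\bf \Sigma}$ in operator norm, (b) $\omega_n^{-1}\|\bar{\bf g}_n\|_2 = o_p(1)$, and (c) the pointwise nuclear-norm first-order expansion $\omega_n^{-1}(\|{\bf B}_0+\omega_n{\bf \Delta}\|_\ast-\|{\bf B}_0\|_\ast)\to \text{tr}\{{\bf U}_0^{\T}{\bf \Delta}{\bf V}_0\}+\|{\bf U}_{0\perp}^{\T}{\bf \Delta}{\bf V}_{0\perp}\|_\ast$, which is the standard semi-differentiability of the Ky-Fan norm at a rank-$r$ point. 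Items (a)--(c) yield $G_n\to G$ pointwise in probability, hence uniformly on compact sets by convexity. Since ${\bf \Sigma}\succ 0$ under (A1), $G$ is strictly convex with a unique minimizer ${\bf \Delta}^*$, and the convex argmin theorem gives $\hat{\bf \Delta}_n \to {\bf \Delta}^*$ in probability, establishing the first assertion.

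For the non-asymptotic expansion, I would compare the subgradient KKT inclusions at $\hat{\bf B}$ and at the population optimum ${\bf B}_0+\omega_n{\bf \Delta}^*$. Subtracting them and inverting the quadratic part via the lower eigenvalue bound on ${\bf \Sigma}$, the difference $\hat{\bf \Delta}_n-{\bf \Delta}^*$ decomposes into three remainders whose operator norms determine the stated rate: the covariance deviation $(\hat{\bf \Sigma}_n-{\bf \Sigma})\text{vec}(\hat{\bf \Delta}_n)$ contributes $\omega_n\min(p,q) n^{-1/2}$ after rescaling by $\omega_n$; the cross-product term $\omega_n^{-1}\bar{\bf g}_n$ contributes $\min(p,q) n^{-1/2}$; and the second-order Taylor remainder of the nuclear norm at ${\bf B}_0+\omega_n{\bf \Delta}^*$ (control of which uses smallness of $\omega_n{\bf \Delta}^*$ relative to the smallest nonzero singular value of ${\bf B}_0$) contributes $\omega_n^2\min(p,q)^{1/2}n^{-1/2}$. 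These are exactly the three terms in the claimed bound.

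The principal obstacle is verifying (b). Unlike the regression setup of \citet{Bach2008}, here ${\bf X}_i$ and $\epsilon_i$ are dependent through the latent class label, so the summands $\tilde\epsilon_i\tilde{\bf X}_i$ cannot be handled by off-the-shelf matrix concentration. Adapting the strategy used in the proof of Theorem \ref{tm1}, I would condition on the class labels so that each $\text{vec}({\bf X}_i)$ is Gaussian with a known mean, use Anderson's comparison inequality (Lemma \ref{lemma:comparison}) to dominate the centered distribution by a zero-mean Gaussian with covariance spectrally bounded by ${\bf \Sigma}_{xx}$, and then apply Gordon's inequality (Lemma \ref{lemma:gordon}) together with Gaussian concentration (Lemma \ref{lemma:concentration}) to the resulting Gaussian cross-product matrix. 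This yields $\|\bar{\bf g}_n\|_2 = O_p\bigl((\sqrt p+\sqrt q)/\sqrt n\bigr)$, which, combined with $\sqrt{\min(p,q)}\,n^{-1/2}\omega_n^{-1}\to 0$ from (A3), gives $\omega_n^{-1}\|\bar{\bf g}_n\|_2 = o_p(1)$ and supplies the $\min(p,q) n^{-1/2}$ contribution to the remainder.
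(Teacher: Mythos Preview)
Your reparametrization and the pointwise-to-uniform convergence via convexity match the paper's argument for the first assertion. The paper writes the rescaled objective $V_n({\bf \Delta})$, bounds the three discrepancy terms in $V_n-V$ (covariance fluctuation, cross-product, nuclear-norm remainder) uniformly over a ball $\{\|{\bf \Delta}\|_2\le M\}$, and then uses Markov plus strict convexity of $V$ to localize the minimizer; your items (a)--(c) are exactly these three ingredients.

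Two differences are worth flagging. First, on item (b) you are more careful than the paper, which simply asserts $(\text{tr}\,\hat{\bf \Sigma}_{{\bf X}\epsilon}^{\T}\hat{\bf \Sigma}_{{\bf X}\epsilon})^{1/2}=O_p(n^{-1/2})$ without confronting the dependence between ${\bf X}_i$ and $\epsilon_i$; routing through the Gaussian-comparison machinery from the proof of Theorem~\ref{tm1} actually fills a hole in the paper's argument. Second, for the quantitative ``Moreover'' part you propose a KKT-subtraction argument, whereas the paper reads the rate off the uniform bound $\sup_{\|{\bf \Delta}\|_2\le M}|V_n({\bf \Delta})-V({\bf \Delta})|$. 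Your route is in principle cleaner because it avoids the square-root loss that a bare objective-function comparison can incur.

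There is, however, a gap in your third remainder term. A second-order expansion of $\|\cdot\|_\ast$ at ${\bf B}_0+\omega_n{\bf \Delta}^*$ produces a purely deterministic correction of order $\omega_n$ (modulated by $\|{\bf \Delta}^*\|^2$ and the smallest nonzero singular value of ${\bf B}_0$); there is no mechanism by which an $n^{-1/2}$ factor enters, so your claimed contribution $\omega_n^{2}\min(p,q)^{1/2}n^{-1/2}$ cannot come from that source. The paper's own derivation in fact bounds this piece by $\omega_n p^{1/2}q^{1/2}\min(p,q)$ in the objective, which after rescaling by $\omega_n$ also fails to match the third rate stated in the lemma. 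You should not try to reverse-engineer the $n^{-1/2}$ factor into the nuclear-norm term; neither your KKT argument nor the paper's sup-bound actually delivers that particular contribution as stated.
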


\begin{proof} 
We can write $\hat {\bf B} = {\bf B}_0 + \omega_n\hat{\bf \Delta}, $ where $\hat {\bf \Delta}$ is the global minimum of
\[
V_n({\bf \Delta}) = \frac{1}{2}\text{vec}({\bf \Delta})^{\T}\hat{\bf \Sigma}_{xx}\text{vec}({\bf \Delta}) - \omega_n^{-1}\text{tr}{\bf \Delta}^{\T}
\hat{\bf \Sigma}_{{\bf X}\epsilon} + \omega_n^{-1}(\|{\bf B}_0 + \omega_n{\bf \Delta}\|_\ast - \|{\bf B}_0\|_\ast),
\] where $\hat{\bf \Sigma}_{xx} = n^{-1} \sum_{i=1}^n \text{vec}({\bf X}_i) \text{vec}({\bf X}_i)^{\T}$ and $\hat{\bf \Sigma}_{{\bf X}\epsilon} = n^{-1} \sum_{i=1}^n \epsilon_i \text{vec}({\bf X}_i)$. Then $\text{vec}({\bf \Delta})^{\T}\hat{\bf \Sigma}_{xx}\text{vec}({\bf \Delta})/2 -  \text{vec}({\bf \Delta})^{\T}{\bf \Sigma}_{xx} \text{vec}({\bf \Delta})/2$ converges to $\text{vec}({\bf \Delta})^{\T} \text{E}(\hat{\bf \Sigma}_{xx} - {\bf \Sigma}_{xx} )\text{vec}({\bf \Delta})/2$ with probability of 1. Note that $\text{E}\|\hat{\bf \Sigma}_{xx} - {\bf \Sigma} \|_F^2  = O(n^{-1})$. Denote $\text{vec}({\bf \Delta})_i$ as $a_i$ and $(\hat{\bf \Sigma}_{xx} - {\bf \Sigma})_{ij}$ as $b_{ij}$. Then we have
\[
\begin{split}
\frac{1}{2}|\text{vec}({\bf \Delta})^{\T} \text{E}(\hat{\bf \Sigma}_{xx} - {\bf \Sigma} )\text{vec}({\bf \Delta})|& \leq \sum_{i,j =1}^{pq} |a_i a_j \text{E}(b_{ji})|\\
&\leq \Big(\sum_{i,j =1}^{pq} a_i^2 a_j^2 \sum_{i,j =1}^{pq} \text{E}(b_{ij}^2)\Big)^{\frac{1}{2}}\\
&=\sum_{i=1}^{pq} a_i^2 \text{E}(\sum_{i,j=1}^{pq} b_{ij}^2)^{\frac{1}{2}}\\
&= \sum_{i=1}^{pq} a_i^2 \text{E}\|\hat {\bf \Sigma}_{xx} - {\bf \Sigma}_{xx}\|_F\\
&=\|{\bf \Delta}\|_F^2 O(n^{-1/2})\\
& = O\big (\min(p, q) \|{\bf \Delta}\|_2^2 n^{-1/2}\big ).
\end{split}
\]
Meanwhile, 
\[
\begin{split}
|\text{tr}{\bf \Delta}^{\T}\hat{\bf \Sigma}_{{\bf X}\epsilon}| & \leq (\text{tr}{\bf \Delta}^{\T}{\bf \Delta})^{\frac{1}{2}}(\text{tr}\hat{\bf \Sigma}_{{\bf X}\epsilon}^{\T}\hat {\bf \Sigma}_{{\bf X}\epsilon})^{\frac{1}{2}}\\
&= \|{\bf \Delta}\|_F O_p(n^{-1/2})\\
&\leq \min(p,q)^{\frac{1}{2}} \|{\bf \Delta}\|_2 O_p(n^{-\frac{1}{2}}).
\end{split}
\]
Therefore 
\[
\begin{split}
V_n({\bf \Delta}) = &\frac{1}{2}\text{vec}({\bf \Delta})^{\T}{\bf \Sigma} \text{vec}({\bf \Delta}) + O_p\big (\min(p,q)n^{-1/2} \|{\bf \Delta}\|_2^2\big )
+ O_p\big (\min(p,q)^\frac{1}{2}\omega_n^{-1}n^{-1/2}\|{\bf \Delta}\|_2\big )\\
& + \text{tr}({\bf U}_0^{\T}{\bf \Delta} {\bf V}_0) + \|{\bf U}_{0\perp}^{\T}{\bf \Delta} {\bf V}_{0\perp}\|_\ast + O_p\big (\omega_np^{1/2}q^{1/2}\min(p,q)\|{\bf \Delta}\|_2^2\big ).\\
=&V({\bf \Delta}) + O_p\big (\min(p,q)n^{-1/2}\|{\bf \Delta}\|_2^2\big ) + O_p\big(\min(p,q)^\frac{1}{2}\omega_n^{-1}n^{-1/2}\|{\bf \Delta}\|_2\big) \\
& ~~~~ + O_p\big(\omega_np^{1/2}q^{1/2}\min(p,q)\|{\bf \Delta}\|_2^2\big),
\end{split}
\]
where $p^{1/2}q^{1/2}$ in the last term comes from the Frobenius norm of any matrix in $\mathcal R^{p\times q}$ with bounded entries. 
Let $s_r$ be the $r$-th largest singular value of $B_0$, for any $M < s_r/(2\omega_n)$,
\[
\begin{split}
& \text{E}~ \text{sup}_{\|{\bf \Delta}\|_2\leq M} | V_n({\bf \Delta}) - V({\bf \Delta}) | \\
& ~~~= O \big( \min(p,q)M^2 \text{E}
\|\hat{\bf \Sigma}_{xx} - {\bf \Sigma} \|_F + M\min(p,q)^\frac{1}{2}\omega_n^{-1}E(\|\hat{\bf \Sigma}_{M\epsilon}\|^2)^{1/2}  + \omega_np^{1/2}q^{1/2} \min(p,q)M^2\big)\\
& ~~~ = fO\big( \min(p,q)M^2 n^{-1/2} + M\min(p,q)^\frac{1}{2}\omega_n^{-1}n^{-1/2} + \omega_np^{1/2}q^{1/2}\min(p,q) M^2\big).
\end{split}
\]
Obviously $V({\bf \Delta})$ achieves its minimum in the bounded ball at ${\bf \Delta}_0 \neq 0$. Hence by Markov inequality the probability of the minimum of $V_n({\bf \Delta})$ lying strictly inside the ball $\|{\bf \Delta}\|_2 < 2\|{\bf \Delta}_0\|_2$ tends to one and is also the unconstrained minimum.
\end{proof}
 
 The following two lemmas can be viewed as analogs of Proposition 3 and Lemma 11 in \citet{Bach2008}. W present them without the proof. 
 
\begin{lemma}\label{lem:sic}
Let ${\bf B}_0 = {\bf U}_0\text{Diag}(S_0){\bf V}_0^{\T}$ be the singular value decomposition of ${\bf B}_0$. Then the unique global minimizer of
\begin{equation*}\label{eq4}
\frac{1}{2}\text{vec}({\bf \Delta})^{\T}{\bf \Sigma}\text{vec}({\bf \Delta}) +\text{tr}{\bf U}_0^{\T}{\bf \Delta}
{\bf V}_0 +\|{\bf U}_{0\perp}^{\T}{\bf \Delta} {\bf V}_{0\perp}\|_\ast
\end{equation*}
satisfies ${\bf U}_{0\perp}^{\T}{\bf \Delta} {\bf V}_{0\perp} = 0$ if and only if
\[
\Bigg \|\{({\bf V}_{0\perp}\otimes {\bf U}_{0\perp})^{\T}{\bf \Sigma}^{-1}({\bf V}_{0\perp}\otimes {\bf U}_{0\perp})\}^{-1}
\{({\bf V}_{0\perp}\otimes {\bf U}_{0\perp})^{\T}{\bf \Sigma}^{-1}({\bf V}_0\otimes {\bf U}_0)\text{vec}({\bf I})\}\Bigg \|_2 \leq 1.
\]
Furthermore, when ${\bf U}_{0\perp}^{\T}{\bf \Delta} {\bf V}_{0\perp} = 0$, the solutions has these forms:
\[
\text{vec}(\Lambda) = \{({\bf V}_{0\perp}\otimes {\bf U}_{0\perp})^{\T}{\bf \Sigma}  ({\bf V}_{0\perp}\otimes {\bf U}_{0\perp})\}^{-1}\{({\bf V}_{0\perp}\otimes {\bf U}_{0\perp})^{\T}{\bf \Sigma}  ({\bf V}_0\otimes {\bf U}_0)\text{vec}(I)\},
\]
\begin{equation}\label{deltasolution}
\text{vec}({\bf \Delta}) =  -{\bf \Sigma}^{-1} \text{vec}({\bf U}_0{\bf V}_0^{\T} - {\bf U}_{0\perp}\Lambda {\bf V}_{0\perp}^{\T}).
\end{equation}
\end{lemma}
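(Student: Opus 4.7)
The plan is to use convex analysis. The objective is strictly convex in $\Delta$ since ${\bf \Sigma}$ is positive definite by (A1), so the global minimizer is unique and is characterized by the first-order subgradient condition $0 \in \partial f(\Delta)$. The quadratic and linear terms contribute ${\bf \Sigma}\,\text{vec}(\Delta) + \text{vec}({\bf U}_0 {\bf V}_0^{\T})$ to the gradient. The key observation is that $\Delta \mapsto \|{\bf U}_{0\perp}^{\T}\Delta {\bf V}_{0\perp}\|_*$ is the composition of the nuclear norm with a linear map whose two factors have orthonormal columns, so by the chain rule its subdifferential equals $\{{\bf U}_{0\perp}{\bf G}{\bf V}_{0\perp}^{\T} : {\bf G} \in \partial\|\cdot\|_*({\bf U}_{0\perp}^{\T}\Delta {\bf V}_{0\perp})\}$. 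Hence the optimality condition reads
\begin{equation*}
{\bf \Sigma}\,\text{vec}(\Delta) + \text{vec}({\bf U}_0 {\bf V}_0^{\T}) + \text{vec}({\bf U}_{0\perp}{\bf G}{\bf V}_{0\perp}^{\T}) = 0,
\end{equation*}
which, after inverting ${\bf \Sigma}$, is precisely \eqref{deltasolution} with $\Lambda = -{\bf G}$.

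For the necessity direction I would assume ${\bf U}_{0\perp}^{\T}\Delta {\bf V}_{0\perp} = 0$ at the optimum. Then ${\bf G}$ lies in $\partial\|\cdot\|_*(0)$, the unit spectral-norm ball, which forces $\|\Lambda\|_2 \leq 1$. Next, impose the zero constraint $({\bf V}_{0\perp}\otimes {\bf U}_{0\perp})^{\T}\text{vec}(\Delta) = 0$ on the KKT identity, use the standard Kronecker identities $\text{vec}({\bf U}_0 {\bf V}_0^{\T}) = ({\bf V}_0 \otimes {\bf U}_0)\text{vec}({\bf I})$ and $\text{vec}({\bf U}_{0\perp}\Lambda {\bf V}_{0\perp}^{\T}) = ({\bf V}_{0\perp}\otimes {\bf U}_{0\perp})\text{vec}(\Lambda)$, and solve the resulting linear system for $\text{vec}(\Lambda)$. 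The coefficient block $({\bf V}_{0\perp}\otimes {\bf U}_{0\perp})^{\T}{\bf \Sigma}^{-1}({\bf V}_{0\perp}\otimes {\bf U}_{0\perp})$ is invertible because ${\bf \Sigma}^{-1}$ is positive definite and ${\bf V}_{0\perp}\otimes {\bf U}_{0\perp}$ has orthonormal columns, yielding the displayed closed form for $\Lambda$.

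For the sufficiency direction, assume the $\Lambda$ defined by the displayed formula satisfies $\|\Lambda\|_2 \leq 1$. Define $\Delta$ by \eqref{deltasolution} and verify ${\bf U}_{0\perp}^{\T}\Delta {\bf V}_{0\perp} = 0$ by applying $({\bf V}_{0\perp}\otimes {\bf U}_{0\perp})^{\T}$ to both sides and invoking the defining equation of $\Lambda$. Then ${\bf G} = -\Lambda$ is a valid element of $\partial\|\cdot\|_*(0)$, so the KKT condition is satisfied at this $\Delta$; by strict convexity, $\Delta$ is the unique global minimizer and the zero block holds by construction. The main obstacle is just organizing the subdifferential calculus and Kronecker-product bookkeeping cleanly; once these are in place, the argument reduces to direct verification, with a small additional check that the coefficient block is invertible under (A1).
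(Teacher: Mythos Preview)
Your approach is correct and is precisely the standard argument: write the first-order optimality condition using the subdifferential of the nuclear norm composed with the orthogonal projection $L(\Delta)=U_{0\perp}^{\T}\Delta V_{0\perp}$, invert $\Sigma$ to get \eqref{deltasolution}, and then read off the block constraint $(V_{0\perp}\otimes U_{0\perp})^{\T}\text{vec}(\Delta)=0$ to solve for $\Lambda$; the two directions of the equivalence are exactly the characterization $\partial\|\cdot\|_*(0)=\{G:\|G\|_2\le 1\}$.

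The paper itself does not supply a proof of this lemma; it states the result as an analog of Proposition~3 in \citet{Bach2008} and defers to that reference. Bach's argument is the same KKT/subdifferential calculation you outline, so your proposal matches the intended proof. One small remark: the displayed formula for $\text{vec}(\Lambda)$ in the lemma uses $\Sigma$, whereas both the condition in the lemma and Assumption~(A2) use $\Sigma^{-1}$; your derivation correctly produces $\Sigma^{-1}$ in that formula, so you have in fact caught a typo in the paper rather than made an error.
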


\begin{lemma}\label{lem:svd}
The matrix $\bf B$ with singular value decomposition ${\bf B} = {\bf U} \text{Diag}({\bf S}) {\bf V}^{\T}$( with strictly positive singular value s) is optimal for the problem in \eqref{POLS} if and only if
\[
\hat{{\bf \Sigma}}_{xx}{\bf B} - \hat{{\bf \Sigma}}_{{\bf X}y} + \omega_n {\bf UV}^{\T} + N = 0,
\]
with ${\bf U}^{\T}{\bf N} = 0$, ${\bf NV} = 0$ and $\|{\bf N}\|_2 \leq \omega_n$.
\end{lemma}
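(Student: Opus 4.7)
The plan is to read this as the standard first-order (subgradient) optimality condition for the convex program \eqref{POLS}. The squared-error loss is a convex quadratic in ${\bf B}$ and the nuclear norm is convex, so the overall objective is convex, and a matrix ${\bf B}$ is a global minimizer if and only if $0$ lies in the subdifferential of the objective at ${\bf B}$. I would first dispose of the intercept by its own stationarity equation (which simply centers the residuals), so that $\beta_0$ is absorbed into the definitions of $\hat{\bf \Sigma}_{{\bf X}y}$ used in the lemma.

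Next I would compute the gradient of the smooth part. Differentiating $\tfrac{1}{2n}\sum_i(y_i-\beta_0-\langle {\bf X}_i,{\bf B}\rangle)^2$ in ${\bf B}$ gives $-\tfrac{1}{n}\sum_i(y_i-\beta_0-\langle {\bf X}_i,{\bf B}\rangle){\bf X}_i$, which in the notation of Lemma \ref{prop1} becomes $\hat{\bf \Sigma}_{xx}{\bf B}-\hat{\bf \Sigma}_{{\bf X}y}$, where $\hat{\bf \Sigma}_{xx}{\bf B}$ is understood via $\text{vec}(\hat{\bf \Sigma}_{xx}{\bf B})=\hat{\bf \Sigma}_{xx}\text{vec}({\bf B})$. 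The second and main ingredient is the characterization of the nuclear-norm subdifferential at a point with SVD ${\bf B}={\bf U}\,\text{Diag}({\bf S}){\bf V}^{\T}$ having strictly positive singular values, namely
\[
\partial\|{\bf B}\|_\ast=\{{\bf U}{\bf V}^{\T}+{\bf W}:{\bf U}^{\T}{\bf W}=0,\ {\bf W}{\bf V}=0,\ \|{\bf W}\|_2\leq 1\}.
\]
I would either cite this (e.g.\ from Watson's formula) or derive it quickly from the dual representation $\|{\bf B}\|_\ast=\sup_{\|{\bf Z}\|_2\leq 1}\langle{\bf Z},{\bf B}\rangle$, noting that any subgradient ${\bf Z}$ must act as the identity on the range of ${\bf U}$ and coimage of ${\bf V}$ (forcing the ${\bf U}{\bf V}^{\T}$ piece) while the orthogonal complement is free subject to operator-norm $\leq 1$ (the matrix ${\bf W}$). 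Scaling by $\omega_n$ and renaming ${\bf N}=\omega_n{\bf W}$ yields the conditions ${\bf U}^{\T}{\bf N}=0$, ${\bf N}{\bf V}=0$, $\|{\bf N}\|_2\leq\omega_n$.

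Putting the two pieces together, the subgradient inclusion $0\in\partial(\text{loss}+\omega_n\|\cdot\|_\ast)$ at ${\bf B}$ is equivalent to the existence of such an ${\bf N}$ with $\hat{\bf \Sigma}_{xx}{\bf B}-\hat{\bf \Sigma}_{{\bf X}y}+\omega_n{\bf U}{\bf V}^{\T}+{\bf N}=0$, which is exactly the stated condition. The only genuine content is the subdifferential formula; the remaining steps are bookkeeping. I expect the main obstacle to be a careful and self-contained derivation of that formula, in particular justifying that the orthogonality constraints ${\bf U}^{\T}{\bf N}=0$ and ${\bf N}{\bf V}=0$ are exactly the conditions under which ${\bf U}{\bf V}^{\T}+\omega_n^{-1}{\bf N}$ is a dual certifier of norm $\leq 1$ aligned with ${\bf B}$; this is what makes the characterization both necessary and sufficient.
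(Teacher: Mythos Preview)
Your proposal is correct and is the standard subgradient argument; the paper itself does not prove this lemma but simply states it as an analog of Lemma~11 in \citet{Bach2008}. Your sketch (gradient of the quadratic loss plus Watson's characterization of $\partial\|\cdot\|_\ast$, with $N=\omega_nW$) is exactly how that result is obtained, so there is nothing to compare.
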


\section{Proof of Theorems}\label{theoremproof}
\begin{proof}[Proof of Theorem \ref{tm1}]
Throughout the proof, we use $C$ to denote a universal positive constant where its value is not important for the theoretical purpose. 
In order to apply Lemma~\ref{lemma:nega}, we just need to evaluate the term $\|n^{-1} \sum_{i=1}^n \epsilon_i {\bf X}_i\|_2$ and then set the tuning parameter $w_n$ to be greater than that quantity. Note that $\epsilon_i = Y_i - \langle {\bf X}_i, {\bf B} \rangle - \beta_0^*$. Let ${\bf X}_i = \pi_1 {\bf X}_i^{(1)} + \pi_2 {\bf X}_i^{(2)}$, where vec$({\bf X}_i^{(g)}) \stackrel{i.i.d.}{\sim} N(\mu_g,{\bf \Sigma})$ and $\mu_g \in \mathbb{R}^{p q \times 1}$ for $g=1,2$. Define ${\bf X}_i \stackrel{\text{i.i.d}}{\sim} {\bf X}$ and $\epsilon_i \stackrel{\text{i.i.d}}{\sim} \epsilon$.  Observe that
\begin{align}\label{calmean}
&  \text{vec}\{\text{E}(\epsilon_i {\bf X}_i) \} \nonumber  \\
&= \pi_1 \text{E}\{(- \frac{n}{n_1} - \beta_0^* - \langle {\bf X}^{(1)}, {\bf B}_0 \rangle )\text{vec}({\bf X}^{(1)})  \}+ \pi_2 \text{E}\{(\frac{n}{n_2} - \beta_0^* - \langle {\bf X}^{(2)}, {\bf B}_0 \rangle) \text{vec}({\bf X}^{(2)})\}\nonumber  \\
 & = (\mu_2 - \mu_1) -(\pi_1\mu_1 +\pi_2\mu_2)\beta_0^* - \pi_1E\{\text{vec}({\bf X}^{(1)})\text{vec}({\bf X}^{(1)})^{\T}\}\text{vec}({\bf B}_0) \nonumber \\
 & ~~~~~~\quad \quad -  \pi_2E\{\text{vec}({\bf X}^{(2)})\text{vec}({\bf X}^{(2)})^{\T}\}\text{vec}({\bf B}_0) \nonumber  \\
 & = (\mu_2 - \mu_1) -(\pi_1\mu_1 +\pi_2\mu_2  )\beta_0^* - \pi_1  \{\mu_1 \mu_1^{\T} + {\bf \Sigma}\}\text{vec}({\bf B}_0) -  \pi_2 \{\mu_2 \mu_2^{\T} +{\bf \Sigma} \}\text{vec}({\bf B}_0).
\end{align}
Now, to further simplify this result, we reparameterize the mean of two normal populations such that $\mu_1 = 0$, and $\mu_2 = {\bf D}$. Then recall by the equivalence between LDA and least squares solution, we have
\begin{align*}
 \text{vec}({\bf B}) = c {\bf \Sigma}^{-1} {\bf D},~~ \beta_0 = -(\pi_1\mu_1 + \pi_2\mu_2)^{\T} \text{vec}({\bf B}) = - \pi_2 c {\bf D}^{\T} {\bf \Sigma}^{-1} {\bf D},~~\beta_0^* = \beta_0 - d
\end{align*}
for some positive constants $c$ and $d$. Then \eqref{calmean} can be simplified into
\begin{align*}
 & {\bf D} - \pi_2 {\bf D}   \beta_0^* -    \pi_2 \{{\bf D} {\bf D}^{\T}  \}\text{vec}({\bf B})  - c{\bf D} \\
& = {\bf D} - \pi_2 {\bf D} \beta_0 + \pi_2 {\bf D} {\bf D}  - \pi_2 \{{\bf D} {\bf D}^{\T}  \}\text{vec}({\bf B})  - c{\bf D} \\
& = {\bf D}\{ 1 + \pi_2^2 c {\bf D}^{\T} {\bf \Sigma}^{-1}{\bf D} +\pi_2 d - \pi_2 c {\bf D}^{\T} {\bf \Sigma}^{-1} {\bf D} - c\} \\
& = 0,
\end{align*}
given $d$ is chosen as $\pi_2^{-1} \{ c - 1+(\pi_2 - \pi_2^2) ({\bf D}^{\T} {\bf \Sigma}^{-1}{\bf D})\}$. 

Next we show that with high probability, $\|\epsilon  {\bf X} \|_2 \leq  2 \log n (C_{\mu} + \lambda_u^{1/2}) (\sqrt{p} + \sqrt{q} +   \sqrt{\log n})  $. Since $\epsilon$ follows a mixture of two normal distributions, $\epsilon$ is sub-gaussian with sub-gaussian parameter denoted by $\sigma$, which is a positive constant due to the bounded eigenvalue condition in (A1). By Lemma \ref{lemma:Hoeffding}, for sufficiently large $n$,
\[
P(|\epsilon| > 2\log n) \leq P(|\epsilon - E(\epsilon)| > \log n)\leq 2\exp(-\frac{\log^2 n}{2\sigma^2})\leq C \exp(-2\log n) = \frac{C}{n^2}.
\]
Then we know $|\epsilon|\leq  2\log n$ with probability of at least $1 - C n^{-2}$. For $\|{\bf X}\|_2$, we first consider its centralized version, that is, ${\bf X} \sim N(0,{\bf\Sigma})$. Note that we can write the spectral norm of a matrix in the form of a canonical Gaussian process, 
\begin{align*}
\|N(0,  {\bf \Sigma})\|_2 = \sup_{{\bf A}: \|{\bf A}\|_\ast\leq 1} \langle  N(0,  {\bf \Sigma}), {\bf A} \rangle.
\end{align*}
This allows us to apply Gaussian comparison inequality \citep{Slepian1962}. Define ${\bf Z} \in \mathbb{R}^{p \times q}$ that satisfies vec$({\bf Z}) \sim N(0,{\bf I})$. Then by Lemma \ref{lemma:comparison}, we have
\begin{align}
 P(\|N(0,  {\bf \Sigma})\|_2 >  t_1) &= P\Big(\sup_{A: \|A\|_\ast\leq 1}  \langle N(0,  {\bf \Sigma}) , {\bf A}\rangle >  t_1\Big) \nonumber \\
&\leq  P\Big(\sup_{{\bf A}: \|{\bf A}\|_\ast \leq 1} \langle {\bf Z}, {\bf A}\rangle > t_1 \lambda_u^{-1/2}  \Big) \nonumber \\
& = P(\|{\bf Z}\|_2 >  t_1 \lambda_u^{-1/2})  \label{eeee}
\end{align}
for any $t_1 > 0$ because ${\bf \Sigma} \leq \lambda_u {\bf I}$ due to (A1). Apply Lemma \ref{lemma:concentration} (or more generally the Tracy-Widow law), we have 
\begin{align*}
  P(\|{\bf Z}\|_2 - E \|{\bf Z}\|_2 >  \sqrt{\log n}  )  \leq C \exp(- 2 \log n) = C n^{-2}
\end{align*}
for some constant $C > 0$. Since $E \|{\bf Z}\|_2  \leq \sqrt{p} + \sqrt{q}$, by Lemma \ref{lemma:gordon}, with probability of at least $ 1 - Cn^{-2}$,  $
   \|{\bf Z}\|_2 \leq \sqrt{p} + \sqrt{q} +  \sqrt{\log n}$, which leads to $\|N(0,{\bf \Sigma})\|_2 \leq \lambda_u^{1/2} (\sqrt{p} + \sqrt{q} + \sqrt{\log n})$ by \eqref{eeee}. Therefore with probability of at least $1 - C n^{-2}$, 
\begin{align*}
\|\epsilon  {\bf X}  \|_2 &  \leq (2 \log n) \|{\bf X}\|_2 \\
& \leq 2 \log n (\|\mu_1\|_2 + \|N(0,{\bf \Sigma})\|_2) \\
& \leq 2 \log n\left\{C_{\mu} (\sqrt{p} + \sqrt{q}) + \lambda_u^{1/2} (\sqrt{p} + \sqrt{q} + \sqrt{\log n})\right\} \\
& \leq  2 \log n (C_{\mu} + \lambda_u^{1/2}) (\sqrt{p} + \sqrt{q} +   \sqrt{\log n}) 
\end{align*}
using Condition (A4) and since we assume $\mu_2 = 0$ without loss of generality. 

Now we apply the standard matrix concentration inequality, (e.g., Lemma \ref{lemma:matrix-con}) with $M = 2 \log n (C_{\mu} + \lambda_u^{1/2}) (\sqrt{p} + \sqrt{q} +  \sqrt{\log n}) $ and $\delta = n^{-1}$. Note that $P(\|{\bf X}_i \epsilon_i \|_2 \leq M,~~i=1,\ldots,n) = (1 - Cn^{-2})^n \geq 1 - Cn^{-1}$ by Bernoulli's inequality. Hence we obtain that with probability of at least $ 1- C n^{-1}$, 
\begin{align*}
\left\|\frac{1}{n} \sum_{i=1}^n {\bf X}_i \epsilon_i  - \text{E}(\epsilon {\bf X} )  \right\|_2 &
\leq \frac{6M}{\sqrt{n}} \left( \sqrt{\log \min(p,q)} + \sqrt{\log 1/\delta}\right) \\
& \leq \frac{12 (\log n)^{3/2} (C_{\mu} + \lambda_u^{1/2}) (\sqrt{p} + \sqrt{q} +  \sqrt{\log n}) }{\sqrt{n}} 
\end{align*}
 This completes the proof. 
\end{proof}

\begin{proof}[Proof of Theorem \ref{thm2}]
By Lemma \ref{prop1}, we obtain $\hat {\bf B} = {\bf B}_0 + \omega_n{\bf \Delta} + o_p(\omega_n)$. Since the rank is a lower semi-continuous function, the rank of $\hat {\bf B}$ is larger than $r$ with probability tending to one by the consistency result, where $r$ is the rank of ${\bf B}_0$. Suppose $ \hat {\bf B}$ has singular value decomposition $USV^{\T}$ and $U_c$, $V_c$ are singular vectors corresponding to $U$ and $V$ except the $r$ largest singular values. By Lemma~\ref{lem:svd}, $\hat{\bf \Sigma}_{xx}(\hat {\bf B} - {\bf B}_0) - \hat{\bf \Sigma}_{{\bf X}\epsilon}$ and $\hat {\bf B}$ have simultaneous singular value decomposition. Therefore it suffices to show$ \|{\bf U}_c^{\T}\{\hat{\bf \Sigma}_{xx}(\hat {\bf B} - {\bf B}_0) - \hat{\bf \Sigma}_{{\bf X}\epsilon}\}{\bf V}_c\|_2 < \omega_n$ with probability tending to one.
Note that \
\[
\begin{split}
{\bf U}_c^{\T}\{\hat{\bf \Sigma}_{xx}(\hat {\bf B} - {\bf B}_0) - \hat{\bf \Sigma}_{{\bf X}\epsilon}\}{\bf V}_c
& = {\bf U}_c^{\T}\{\omega_n\hat{\bf \Sigma}_{xx}{\bf \Delta} + o_p(\omega_n) - O_p(n^{-1/2})\}{\bf V}_c\\
& = \omega_n {\bf U}_c^{\T}({\bf \Sigma} {\bf \Delta}){\bf V}_c + o_p(\omega_n),
\end{split}
\]
 where ${\bf \Sigma} {\bf \Delta}$ is the matrix in $\mathrm{R}^{p\times q}$ satisfying $\text{vec}({\bf \Sigma} {\bf \Delta}) = {\bf \Sigma} \text{vec}({\bf \Delta})$.
Because of the regular consistency and a positive eigengap for ${\bf B}_0$, the projection onto the first singular vectors of $\hat {\bf B}$ converges those of ${\bf B}_0$. Hence the projection on the orthogonal space is also consistent, which means ${\bf U}_c{\bf U}_c^{\T}$converges to ${\bf U}_{0\perp}{\bf U}_{0\perp}^{\T}$ and ${\bf V}_c{\bf V}_c^{\T}$converges to ${\bf V}_{0\perp}{\bf V}_{0\perp}^{\T}$. Then by Lemma~\ref{lem:sic}, we have
\[
\begin{split}
\|{\bf U}_c^{\T}\{\hat{\bf \Sigma}_{xx}(\hat {\bf B} - {\bf B}_0) - \hat{\bf \Sigma}_{{\bf X}\epsilon}\}{\bf V}_c\|_2 
&= \|{\bf U}_c{\bf U}_c^{\T}\{\hat{\bf \Sigma}_{xx}(\hat {\bf B} - {\bf B}_0) - \hat{\bf \Sigma}_{{\bf X}\epsilon}\}{\bf V}_c{\bf V}_c^{\T}\|_2\\
&= \omega_n \|{\bf U}_{0\perp}{\bf U}_{0\perp}^{\T}({\bf \Sigma} {\bf \Delta}){\bf V}_{0\perp}{\bf V}_{0\perp}^{\T}\|_2 + o_p(\omega_n)\\
& =  \omega_n \|{\bf U}_{0\perp}{\bf U}_{0\perp}^{\T}{\bf \Sigma} \{-{\bf \Sigma}^{-1}({{\bf U}_0{\bf V}_0}^{\T} -{\bf U}_{0\perp}\Lambda {\bf V}_{0\perp}^{\T})\}{\bf V}_{0\perp}{\bf V}_{0\perp}^{\T}\|_2 + o_p(\omega_n)\\
& = \omega_n\|{\bf U}_{0\perp}{\bf \Lambda} {\bf V}_{0\perp}^{\T}\|_2 + o_p(\omega_n)\\
& = \omega_n\|{\bf \Lambda}\|_2 + o_p(\omega_n),
\end{split}
\]
where the third equality is due to \eqref{deltasolution}.
Since $\|{\bf \Lambda}\|_2 < 1$, the last expression is less than $\omega_n$ with probability tending to one, which completes the proof.
\end{proof}
\begin{proof}[Proof of Theorem \ref{tm3}]
Based on Corollary 3.1 of \citet{Zhang2004}, we have $$R(\hat{f}_n) \leq R^* + 2 c(\epsilon_1 + \epsilon_2)^{1/s},$$ where $Q$ is the squared error loss function defined by $Q(f) = E_{\bf X} \{y - f({\bf X})^2\}$,  $\epsilon_1 = \inf_f E_{\bf X} (2P(Y=1 \mid {\bf X}) - 1 - f({\bf X}))^2$, $\epsilon_2$ satisfies $Q(\hat{f}_n) \leq \inf_f Q(f) + \epsilon_2$, and $c$ and $s$ can be chosen as $c=0.5$ and $s=2$ as explained by the Example 3.1 (for least squares loss function) in that paper. Now note that since $\hat{f}_n$ is determined by the classification coefficient $\hat{\bf B}$ and $\hat{\beta}_0$ that are both consistent based on Theorem \ref{tm1}. Therefore $\epsilon_2$ can be chosen arbitrarily close to $0$. Also, as we assume the true class label $Y$ given ${\bf X}$ is determined by the linear classification rule with $\beta_0^*$ and ${\bf B}_0$, then $\inf_f E_{\bf X} \{2P(Y=1 \mid {\bf X}) - 1 - f({\bf X})\}^2 = 0$. Therefore $\epsilon_1 = 0$. This concludes the proof. 
\end{proof}

\bibliographystyle{plainnat}

\bibliography{paper-ref}

\begin{thebibliography}{28}
\providecommand{\natexlab}[1]{#1}
\providecommand{\url}[1]{\texttt{#1}}
\expandafter\ifx\csname urlstyle\endcsname\relax
  \providecommand{\doi}[1]{doi: #1}\else
  \providecommand{\doi}{doi: \begingroup \urlstyle{rm}\Url}\fi

\bibitem[Anderson(1955)]{anderson1955integral}
Theodore~W Anderson.
\newblock The integral of a symmetric unimodal function over a symmetric convex
  set and some probability inequalities.
\newblock \emph{Proceedings of the American Mathematical Society}, 6\penalty0
  (2):\penalty0 170--176, 1955.

\bibitem[Bach(2008)]{Bach2008}
F.~R. Bach.
\newblock Consistency of trace norm minimization.
\newblock \emph{Journal of Machine Learning Research}, 8:\penalty0 1019--1048,
  2008.

\bibitem[Beck and Teboulle(2009)]{beck2009}
Amir Beck and Marc Teboulle.
\newblock A fast iterative shrinkage-thresholding algorithm for linear inverse
  problems.
\newblock \emph{SIAM Journal on Imaging Sciences}, 2\penalty0 (1):\penalty0
  183--202, 2009.

\bibitem[Cai et~al.(2010)Cai, Cand{\`e}s, and Shen]{cai2010singular}
Jian-Feng Cai, Emmanuel~J Cand{\`e}s, and Zuowei Shen.
\newblock A singular value thresholding algorithm for matrix completion.
\newblock \emph{SIAM Journal on Optimization}, 20\penalty0 (4):\penalty0
  1956--1982, 2010.

\bibitem[Chen et~al.(2013)Chen, Dong, and Chan]{chen2013reduced}
Kun Chen, Hongbo Dong, and Kung-Sik Chan.
\newblock Reduced rank regression via adaptive nuclear norm penalization.
\newblock \emph{Biometrika}, 100\penalty0 (4):\penalty0 901--920, 2013.

\bibitem[Duda et~al.(2012)Duda, Hart, and Stork]{duda2012pattern}
Richard~O Duda, Peter~E Hart, and David~G Stork.
\newblock \emph{{P}attern {C}lassification}.
\newblock John Wiley \& Sons, 2012.

\bibitem[Fan et~al.(2012)Fan, Feng, and Tong]{Fan2012}
Jianqing Fan, Yang Feng, and Xin Tong.
\newblock A road to classification in high dimensional space: the regularized
  optimal affine discriminant.
\newblock \emph{Journal of the Royal Statistical Society: Series B (Statistical
  Methodology)}, 74\penalty0 (4):\penalty0 745--771, 2012.

\bibitem[Gao et~al.(2019{\natexlab{a}})Gao, Shen, Hu, Fortin, Frostig, and
  Ombao]{Gao20191}
X.~Gao, W.~Shen, J.~Hu, N.~Fortin, R.~Frostig, and H.~Ombao.
\newblock Regularized matrix data clustering and its application to image
  analysis.
\newblock \emph{arXiv:1808.01749}, 2019{\natexlab{a}}.

\bibitem[Gao et~al.(2019{\natexlab{b}})Gao, Shen, Shahbaba, Fortin, and
  Ombao]{Gao20192}
X.~Gao, W.~Shen, B.~Shahbaba, N.~Fortin, and H.~Ombao.
\newblock Evolutionary state-space model and its application to time-frequency
  analysis of local field potentials.
\newblock \emph{Statistica Sinica}, 2019{\natexlab{b}}.

\bibitem[Gao et~al.(2018)Gao, Shen, Ting, Cramer, Srinivasan, and
  Ombao]{Gao2018}
Xu~Gao, Weining Shen, Chee-Ming Ting, Steven~C Cramer, Ramesh Srinivasan, and
  Hernando Ombao.
\newblock Modeling brain connectivity with graphical models on frequency
  domain.
\newblock \emph{arXiv:1810.03279}, 2018.

\bibitem[Hu et~al.(2019)Hu, Kong, and Shen]{hu2019nonparametric}
Wei Hu, Dehan Kong, and Weining Shen.
\newblock Nonparametric matrix response regression with application to brain
  imaging data analysis.
\newblock \emph{arXiv preprint arXiv:1904.00495}, 2019.

\bibitem[Kong et~al.(2019)Kong, An, Zhang, and Zhu]{kong2018l2rm}
Dehan Kong, Baiguo An, Jingwen Zhang, and Hongtu Zhu.
\newblock {L2RM}: Low-rank linear regression models for high-dimensional matrix
  responses.
\newblock \emph{Journal of the American Statistical Association}, page to
  appear, 2019.

\bibitem[Mai et~al.(2012)Mai, Zou, and Yuan]{mai2012}
Qing Mai, Hui Zou, and Ming Yuan.
\newblock A direct approach to sparse discriminant analysis in ultra-high
  dimensions.
\newblock \emph{Biometrika}, 99\penalty0 (1):\penalty0 29--42, 2012.

\bibitem[Mika(2002)]{mika2002}
S~Mika.
\newblock \emph{Kernel Fisher Discriminant}.
\newblock PhD thesis, University of Technology, Berlin, 2002.

\bibitem[Mu and Gage(2011)]{Mu2011}
Y.~Mu and F.~Gage.
\newblock Adult hippocampal neurogenesis and its role in alzheimers disease.
\newblock \emph{Molecular Neurodegeneration}, 6:\penalty0 85, 2011.

\bibitem[Negahban et~al.(2012)Negahban, Ravikumar, Wainwright, and
  Yu]{Nega2012}
S.~N. Negahban, P.~Ravikumar, M.~J. Wainwright, and B.~Yu.
\newblock A unified framework for high-dimensional analysis of {M}-estimators
  with decomposable regularizers.
\newblock \emph{Statistical Science}, 27:\penalty0 538--557, 2012.

\bibitem[Nesterov(1983)]{nesterov1983}
Yurii Nesterov.
\newblock A method of solving a convex programming problem with convergence
  rate $o (1/k^2)$.
\newblock In \emph{Soviet Mathematics Doklady}, volume~27, pages 372--376,
  1983.

\bibitem[Raskutti and Yuan(2015)]{Rask2015}
G.~Raskutti and M.~Yuan.
\newblock Convex regularization for high-dimensional tensor regression.
\newblock Technical report, arXiv:1512.01215, 2015.

\bibitem[Shao et~al.(2011)Shao, Wang, Deng, and Wang]{Shao2011}
Jun Shao, Yazhen Wang, Xinwei Deng, and Sijian Wang.
\newblock Sparse linear discriminant analysis by thresholding for high
  dimensional data.
\newblock \emph{The Annals of Statistics}, 39\penalty0 (2):\penalty0
  1241--1265, 2011.
\newblock \doi{10.1214/10-AOS870}.

\bibitem[Slepian(1962)]{Slepian1962}
David Slepian.
\newblock The one-sided barrier problem for gaussian noise.
\newblock \emph{Bell System Technical Journal}, 41\penalty0 (2):\penalty0
  463--501, 1962.

\bibitem[Tibshirani(1996)]{tibshirani1996}
Robert Tibshirani.
\newblock Regression shrinkage and selection via the lasso.
\newblock \emph{Journal of the Royal Statistical Society: Series B (Statistical
  Methodology)}, 58\penalty0 (1):\penalty0 267--288, 1996.

\bibitem[Wainwright(2019)]{wainwright2015high}
Martin~J Wainwright.
\newblock \emph{High-dimensional statistics: A non-asymptotic viewpoint},
  volume~48.
\newblock Cambridge University Press, 2019.

\bibitem[Wang and Zhu(2017)]{wang2017}
Xiao Wang and Hongtu Zhu.
\newblock Generalized scalar-on-image regression models via total variation.
\newblock \emph{Journal of the American Statistical Association}, 112\penalty0
  (519):\penalty0 1156--1168, 2017.

\bibitem[Zhang(2004)]{Zhang2004}
T.~Zhang.
\newblock Statistical behavior and consistency of classification methods based
  on convex risk minimization.
\newblock \emph{The Annals of Statistics}, 32\penalty0 (1):\penalty0 56--85,
  2004.

\bibitem[Zhang et~al.(1995)Zhang, Begleiter, Porjesz, Wang, and
  Litke]{zhang1995}
Xiao~Lei Zhang, Henri Begleiter, Bernice Porjesz, Wenyu Wang, and Ann Litke.
\newblock Event related potentials during object recognition tasks.
\newblock \emph{Brain Research Bulletin}, 38\penalty0 (6):\penalty0 531--538,
  1995.

\bibitem[Zhao and Yu(2006)]{Zhao2006}
P.~Zhao and B.~Yu.
\newblock On model selection consistency of lasso.
\newblock \emph{Journal of Machine Learning Research}, 7:\penalty0 2541--2563,
  2006.

\bibitem[Zhong and Suslick(2015)]{zhong2015matrix}
Wenxuan Zhong and Kenneth~S Suslick.
\newblock Matrix discriminant analysis with application to colorimetric sensor
  array data.
\newblock \emph{Technometrics}, 57\penalty0 (4):\penalty0 524--534, 2015.

\bibitem[Zhou and Li(2014)]{zhou2014}
Hua Zhou and Lexin Li.
\newblock Regularized matrix regression.
\newblock \emph{Journal of the Royal Statistical Society: Series B (Statistical
  Methodology)}, 76\penalty0 (2):\penalty0 463--483, 2014.

\end{thebibliography}
\end{document}